\theoremstyle{plain}
\newtheorem{theorem}{Theorem}[section]
\newtheorem{proposition}[theorem]{Proposition}
\newtheorem{lemma}[theorem]{Lemma}
\newtheorem{corollary}[theorem]{Corollary}
\theoremstyle{definition}
\newtheorem{example}[theorem]{Example}
\newtheorem{definition}[theorem]{Definition}
\newtheorem{question}[theorem]{Question}
\theoremstyle{remark}
\newtheorem{remark}[theorem]{Remark}
\newcommand{\bA}{\mathfrak A}
\newcommand{\bB}{\mathfrak B}
\newcommand{\bS}{\mathfrak S}
\newcommand{\bQ}{\mathfrak Q}
\newcommand{\bR}{\mathfrak R}
\DeclareMathOperator{\CSP}{CSP}
\DeclareMathOperator{\GL}{GL}
\numberwithin{equation}{section}
\title[Polynomial-time Tractable Problems over the $p$-adic Numbers]{Polynomial-time Tractable Problems\\ over the $p$-adic Numbers}
\thanks{\today}
\author{Arno Fehm \and Manuel Bodirsky}
\address{Institut f\"{u}r Algebra, Technische Universit\"{a}t Dresden, 01062 Dresden, Germany}
\email{manuel.bodirsky@tu-dresden.de}
\email{arno.fehm@tu-dresden.de}
\begin{document}
 
\begin{abstract} 
We study the computational complexity of fundamental problems over the $p$-adic numbers ${\mathbb Q}_p$ and the $p$-adic integers ${\mathbb Z}_p$.
Gu\'epin, Haase, and Worrell~\cite{GHW19}
proved that checking satisfiability of systems of linear equations 
combined with valuation constraints of the form $v_p(x) = c$ for $p \geq 5$ is NP-complete (both over ${\mathbb Z}_p$ and over ${\mathbb Q}_p$), and left the cases $p=2$ and $p=3$ open. We solve their problem by showing that the problem is NP-complete 
for ${\mathbb Z}_3$ and for ${\mathbb Q}_3$, but that it is in P 
 for ${\mathbb Z}_2$ and for ${\mathbb Q}_2$. 
We also present different polynomial-time algorithms for solvability of systems of linear equations in ${\mathbb Q}_p$ with either constraints of the form $v_p(x) \leq c$ or of the form $v_p(x)\geq c$ for $c \in {\mathbb Z}$.
Finally, we show how our algorithms can be used to decide in polynomial time the satisfiability of systems of (strict and non-strict) linear inequalities 
over ${\mathbb Q}$ together with valuation constraints $v_p(x) \geq c$ for several different prime numbers $p$ simultaneously.
\end{abstract}

\maketitle

\section{Introduction}

\noindent
The satisfiability problem for systems of polynomial equations is an immensely useful computational problem; however, is has a quite bad worst-time complexity: it is NP-hard in arbitrary fields, undecidable over ${\mathbb Z}$~\cite{MatiyasevichDoklady}, not known to be decidable over ${\mathbb Q}$, 
and not known to be in NP for ${\mathbb R}$~\cite{Schaefer2015}.
%~\cite{Canny:1988:AGC:62212.62257}, but not known to be PSAPCE-complete. 
In contrast, the satisfiability problem for systems of \emph{linear} equations
has a much better computational complexity: 
it can be solved in polynomial time 
over ${\mathbb R}$ and, equivalently, over ${\mathbb Q}$, and even over
${\mathbb Z}$ (see, e.g.,~\cite{Schrijver}). 
It is therefore natural to search for meaningful extensions of the satisfiability problem for linear systems that retain some of the pleasant computational properties; in particular, extensions that remain in the complexity class P. It is also interesting to search for meaningful restrictions of the satisfiability problem for systems of polynomial equations that are no longer computationally hard. 

One of the well-studied expansions of linear systems is the expansion by linear \emph{inequalities}. 
Note that $x \leq y$ can be expressed over 
${\mathbb R}$ by
$\exists z (x+z^2 = y)$ (and it can also be expressed over ${\mathbb Q}$ and ${\mathbb Z}$, but we then need a different formula), so this expansion can also be viewed as a restriction of the mentioned problem for systems of polynomial equations. The satisfiability problem for linear inequalities is known to be NP-complete over ${\mathbb Z}$, but remains in P over ${\mathbb Q}$ and ${\mathbb R}$  (e.g., via the ellipsoid method; see, e.g.,~\cite{Schrijver}). 

Other interesting, but less well-known expansions of the linear existential theory of ${\mathbb Z}$ and ${\mathbb Q}$
come from $p$-adic valuations $v_p$, for $p$ a prime number: 
For $x \in {\mathbb Z}$, one defines
$v_p(x):=\sup\{j:p^j|x\}\in\mathbb{N}\cup\{\infty\}$,
and one extends this to $\mathbb{Q}$ by 
$v_p(\frac{a}{b}) := v_p(a) - v_p(b)$.
The complexity of the satisfiability problem 
for systems of linear equalities combined with 
valuation constraints of the form $v_p(x) = c$
for $c\in\mathbb{Z}$
has been studied 
by Gu\'epin, Haase, and Worrell~\cite{GHW19}. 
Their results show that the problem  
over ${\mathbb Q}$ is in NP, even if the constants $c$ are represented in binary and $p$ is part of the input. 
This is remarkable, because for 
any $x = \frac{a}{b} \in {\mathbb Q}$ that satisfies
$v_p(x) = c >0$, the number $a$ has exponential size in $c$, i.e., doubly exponential size in the input size. So we cannot simply guess and verify a solution in binary representation. 

The results of Gu\'epin, Haase, and Worrell are actually stated in a different setting: they phrase their result over the \emph{$p$-adic numbers}. The $p$-adic valuation gives rise to a (non-archimedean) \emph{absolute value}, defined for $x \in {\mathbb Q}$ 
by   
$|x|_p := p^{-v_p(x)}$.  
The \emph{field 
of $p$-adic numbers} ${\mathbb Q}_p$ 
is the completion of ${\mathbb Q}$ 
with respect to $|\cdot|_p$, similarly as ${\mathbb R}$ is defined to be the completion of ${\mathbb Q}$ with respect to the standard absolute value. The \emph{{ring of} $p$-adic integers} 
is the subring ${\mathbb Z}_p$ of ${\mathbb Q}_p$ 
with domain $\{x \in {\mathbb Q}_p \mid v_p(x) \geq 0\}$,
where $v_p$ denotes the natural extension of the $p$-adic valuation to $\mathbb{Q}_p$. 
Gu\'epin, Haase, and Worrell~\cite{GHW19} 
phrase their mentioned results as satisfiability problems over ${\mathbb Q}_p$; however, the problems are equivalent to the respective problems over ${\mathbb Q}$; see Proposition~\ref{prop:qvsqp}. 
They then use their algorithm to prove that the entire existential theory of ${\mathbb Q}_p$ in a suitable (linear) language is in NP.

Gu\'epin, Haase, and Worrel moreover obtain some hardness results: 
they prove that the satisfiability problem for systems of linear equations over ${\mathbb Q}_p$ and over ${\mathbb Z}_p$ with valuation constraints of the form 
$v_p(x) = c$ is NP-hard for $p \geq 5$. 
They also state:
\emph{``While we believe it to be the case, it remains an open problem whether an NP lower bound can also be established for the cases $p=2,3$.''}~\cite[Remark 23]{GHW19}.  

We solve this problem and prove that satisfiability is NP-complete in the case $p=3$ for both ${\mathbb Q}_p$ and ${\mathbb Z}_p$. For $p = 2$, however, we prove containment in P. Interestingly, our algorithm can also cope with constraints of the form $v_p(x) \geq c$, even if $p$ is larger than $2$ {(Theorem~\ref{thm:alg-geq})}. We also find an algorithm that can test the satisfiability of linear systems for $\mathbb{Q}_p$ in the presence of constraints of the form $v_p(x) \leq c$ {(Proposition~\ref{prop:alg-leq})}; it is the combination of both upper and lower valuation bounds that makes the problem hard.
%when $p \geq 3$. 

{Our algorithm can also be used for the satisfiability problem for valuation constraints in combination with linear inequalities over ${\mathbb Q}$.} 
We prove that the satisfiability of systems of (weak and strict) linear inequalities together with various valuation constraints,
for instance of the form $v_p(x) \geq c$, 
can be decided in polynomial time (Theorem \ref{thm:inequalities}). We do allow valuation constraints 
%of the form $v_p(x) \geq c$ 
for different primes in the input; we 
allow binary representations of all coefficients in the input. 
The proof uses the 
fact that linear programming is in P~\cite[{Section 13}]{Schrijver}, 
and the 
approximation theorem for finitely many inequivalent absolute values for ${\mathbb Q}$ (\cite[Ch.~XII, Thm.~1.2]{Lang}).

\medskip 
{\bf{Related Work.}}
The computational complexity for satisfiability problems of semilinear expansions 
of linear inequalities over ${\mathbb Q}$ (equivalently: over ${\mathbb R})$ has been studied in~\cite{Essentially-convex}. The results there state that every expansion of the satisfiability problem for linear inequalities by other semilinear relations is NP-hard, unless all relations $R \subseteq {\mathbb Q}^n$ are \emph{essentially convex}, i.e., 
have the property that for any two $a,b \in R$, all but finitely many rational points on the line segment between $a$ and $b$ are also contained in $R$; moreover, if all relations are essentially convex, then the satisfiability problem is in P~\cite[Theorem 5.2]{Essentially-convex}. 
This result has later been generalised to expansions of linear equalities instead of inequalities~\cite{JonssonThapper15}. 
Valuation constraints are clearly not essentially convex; however, they are also not semilinear, and not even semialgebraic, and hence are not covered by the results from~\cite{Essentially-convex} and from~\cite{JonssonThapper15}.

Different computational tasks for the $p$-adic numbers have been studied by Dolzmann and Sturm~\cite{DolzmannSturmPadic}, and more recently by Haase and Mansutti~\cite{HaaseMansutti}: they showed that whether a given system of linear equations with valuation constraints (where the valuation constraints in~\cite{HaaseMansutti} are more expressive than the ones from~\cite{DolzmannSturmPadic}, which are more expressive than ours) 
has a solution in ${\mathbb Q}_p$ for \emph{all} prime numbers $p$ is in coNExpTime.

Another recent results is a polynomial-time algorithm for the \emph{dyadic feasibility problem}~\cite{DyadicLP}, which is the problem of testing the satisfiability of systems of linear inequalities over 
${\mathbb Z}[\frac{1}{2}]$; it is unclear how to reduce this problem to the problems studied here and vice versa.

\section{Preliminaries}

\noindent
We recall some well-known facts about $p$-adic numbers, see e.g.\ \cite{p-adic-book},
and how we treat them from a logic and a computational point of view.
We write ${\mathbb P} \subseteq {\mathbb N}$ for the set of all prime numbers and we let $p\in\mathbb{P}$.

\subsection{$\mathbb{Q}_p$ and $\mathbb{Z}_p$}
As $\mathbb{Q}_p$ is by definition the completion of $\mathbb{Q}$ with respect to the $p$-adic absolute value $|.|_p$, it is a metric space whose topology is the {\em $p$-adic topology}.
The $p$-adic absolute value on $\mathbb{Q}_p$ gives rise to the $p$-adic valuation $v_p(x)=-\log_p|x|_p$. It satisfies the following basic properties:

\begin{lemma}\label{lem:vpq}
For all $a,b \in {\mathbb Q}_p$ we have 
\begin{itemize}
\item $v_p(a \cdot b) = v_p(a) + v_p(b)$, and 
\item $v_p(a+b) \geq \min(v_p(a),v_p(b))$, with equality if $v_p(a) \neq v_p(b)$.
\end{itemize}
\end{lemma}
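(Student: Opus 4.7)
The plan is to translate both statements into statements about the $p$-adic absolute value $|.|_p$ via the relation $v_p(x) = -\log_p |x|_p$ stated just before the lemma, prove the corresponding assertions for $|.|_p$, and then take $-\log_p$ on both sides to conclude. Under this translation, multiplicativity of $v_p$ becomes multiplicativity of $|.|_p$, and the ultrametric inequality $v_p(a+b)\geq\min(v_p(a),v_p(b))$ becomes $|a+b|_p\leq\max(|a|_p,|b|_p)$.

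For the first bullet, I would note that $|.|_p$ is by definition an absolute value on the field $\mathbb{Q}_p$, hence multiplicative: $|ab|_p=|a|_p\cdot|b|_p$. Applying $-\log_p$ gives $v_p(ab)=v_p(a)+v_p(b)$. (The cases where $a$ or $b$ is $0$, corresponding to $v_p=\infty$, are handled with the usual conventions.) Alternatively, multiplicativity can be proved on $\mathbb{Q}$ directly from the definition $v_p(p^k\frac{m}{n})=k$ for $p\nmid mn$ and then extended to $\mathbb{Q}_p$ by continuity, using density of $\mathbb{Q}$.

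For the ultrametric inequality $|a+b|_p\leq\max(|a|_p,|b|_p)$, the standard approach is to verify it for $a,b\in\mathbb{Q}$ by a short calculation with $p$-adic expansions, and then extend to arbitrary $a,b\in\mathbb{Q}_p$ by approximating them with rational sequences and using the continuity of $|.|_p$ and of $\max$. Passing to $-\log_p$ then yields $v_p(a+b)\geq\min(v_p(a),v_p(b))$.

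For the equality case when $v_p(a)\neq v_p(b)$, say $v_p(a)<v_p(b)$ so that $|a|_p>|b|_p$, I would argue by contradiction: suppose $v_p(a+b)>v_p(a)$, equivalently $|a+b|_p<|a|_p$. Writing $a=(a+b)+(-b)$ and applying the ultrametric inequality just established, together with $|-b|_p=|b|_p$ (from multiplicativity applied to $-1$), gives $|a|_p\leq\max(|a+b|_p,|b|_p)<|a|_p$, a contradiction. Combined with the already proved inequality $v_p(a+b)\geq\min(v_p(a),v_p(b))=v_p(a)$, this forces equality. The only mildly subtle point in the whole argument is this equality case, but it is the standard ultrametric trick; the rest is a direct transfer from well-known properties of $|.|_p$.
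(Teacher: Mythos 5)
Your proposal is correct. Note that the paper does not actually prove this lemma: it is recalled as a well-known fact with a reference to a standard text on $p$-adic numbers. Your argument --- multiplicativity and the ultrametric inequality for $|\cdot|_p$ on $\mathbb{Q}$, extension to $\mathbb{Q}_p$ by continuity, and the standard trick $a=(a+b)+(-b)$ for the equality case when $v_p(a)\neq v_p(b)$ --- is exactly the standard proof that such a reference would supply, so it matches the (implicit) intended justification.
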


The set $\mathbb{Z}_p=\{x\in\mathbb{Q}_p:v_p(x)\geq 0\}$ forms a subring of $\mathbb{Q}_p$ called the {\em ring of $p$-adic integers}. Its unique maximal ideal is generated by $p$, and $\mathbb{Z}_p/p^n\mathbb{Z}_p\cong\mathbb{Z}/p^n\mathbb{Z}$ for every $n\in\mathbb{N}$. This implies the following fact, which we will use several times:

\begin{lemma}\label{lem:ac}
For every $x\in\mathbb{Q}_p \setminus \{0\}$
with $n=v_p(x)$ there exists a unique $i\in\{1,\dots,p-1\}$ such that $v_p(x-ip^n)>n$.
\end{lemma}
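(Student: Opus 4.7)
The plan is to reduce the statement to the already-stated isomorphism $\mathbb{Z}_p/p\mathbb{Z}_p \cong \mathbb{Z}/p\mathbb{Z}$. First I would factor out the valuation: since $v_p(x) = n$, write $x = p^n u$ with $u \in \mathbb{Q}_p$ satisfying $v_p(u) = v_p(x) - v_p(p^n) = 0$ by Lemma~\ref{lem:vpq}. Thus $u \in \mathbb{Z}_p$, and moreover $u$ is a unit in $\mathbb{Z}_p$ because $v_p(u) = 0$ means $u \notin p\mathbb{Z}_p$, i.e., the image of $u$ in the residue ring is nonzero.

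Next I would translate the desired inequality into a congruence. Using $v_p(p^n a) = n + v_p(a)$ from Lemma~\ref{lem:vpq} once more, the condition $v_p(x - ip^n) > n$ is equivalent to $v_p(u - i) \geq 1$, i.e., to $u \equiv i \pmod{p\mathbb{Z}_p}$. Under the natural isomorphism $\mathbb{Z}_p/p\mathbb{Z}_p \cong \mathbb{Z}/p\mathbb{Z}$ recalled just before the lemma, each element of $\mathbb{Z}_p$ has a unique representative in $\{0, 1, \dots, p-1\}$ modulo $p\mathbb{Z}_p$. Hence there is a unique $i \in \{0, 1, \dots, p-1\}$ with $u \equiv i \pmod{p\mathbb{Z}_p}$, and since $u$ is a unit we must have $i \neq 0$, so $i \in \{1, \dots, p-1\}$, giving both existence and uniqueness.

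There is essentially no obstacle here: the argument is a direct translation via the valuation rules of Lemma~\ref{lem:vpq} combined with the structural fact $\mathbb{Z}_p/p\mathbb{Z}_p \cong \mathbb{Z}/p\mathbb{Z}$ stated immediately before the lemma. The only point deserving a sentence of care is why the unique residue $i$ is nonzero, which is precisely the assumption that $v_p(x) = n$ (and not $>n$), equivalently that $u$ is a unit.
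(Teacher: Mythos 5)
Your proof is correct and follows exactly the route the paper intends: the paper states Lemma~\ref{lem:ac} as a direct consequence of $\mathbb{Z}_p/p^n\mathbb{Z}_p\cong\mathbb{Z}/p^n\mathbb{Z}$ without writing out the details, and your argument (factoring $x=p^nu$ with $u$ a unit and reducing modulo $p\mathbb{Z}_p$) is precisely that derivation made explicit. Nothing is missing.
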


This further implies that every $p$-adic number has a unique {\em $p$-adic expansion}:

\begin{lemma}\label{lem:adic-exp}
Every $0\neq x\in\mathbb{Q}_p$ with $n=v_p(x)$ is the limit (in the $p$-adic topology) of a unique series of the form
$\sum_{i=n}^\infty x_ip^i$ with $x_i\in\{0,\dots,p-1\}$ for every $i$.
\end{lemma}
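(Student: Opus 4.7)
The plan is to build the expansion inductively using Lemma~\ref{lem:ac} to choose each digit, and then to verify uniqueness by a standard ultrametric argument based on Lemma~\ref{lem:vpq}.

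For existence I will define coefficients $x_m\in\{0,\dots,p-1\}$ and remainders $r_m\in\mathbb{Q}_p$ for $m\geq n$ recursively, preserving the invariant $v_p(r_m)\geq m$. Start with $r_n:=x$ (so $v_p(r_n)=n$). Given $r_m$, set $x_m:=0$ if $v_p(r_m)>m$; otherwise apply Lemma~\ref{lem:ac} to $r_m$ to obtain the unique $x_m\in\{1,\dots,p-1\}$ with $v_p(r_m-x_mp^m)>m$. In either case put $r_{m+1}:=r_m-x_mp^m$, so that $v_p(r_{m+1})\geq m+1$. The partial sum $S_N:=\sum_{i=n}^N x_ip^i$ then satisfies $x-S_N=r_{N+1}$, whence $|x-S_N|_p\leq p^{-(N+1)}\to 0$, and the series converges to $x$ in the $p$-adic topology.

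For uniqueness, suppose $\sum_{i=n}^\infty x_ip^i$ and $\sum_{i=n}^\infty x_i'p^i$ both converge to $x$, and let $m\geq n$ be the smallest index (if any) where $x_m\neq x_m'$. The difference series $\sum_{i=m}^\infty(x_i-x_i')p^i$ then converges to $0$. Its leading term $(x_m-x_m')p^m$ has valuation exactly $m$, since $0<|x_m-x_m'|<p$ forces $v_p(x_m-x_m')=0$. Every partial sum of the tail $\sum_{i=m+1}^N(x_i-x_i')p^i$ has valuation $\geq m+1$ by iterating the ultrametric inequality of Lemma~\ref{lem:vpq}, and this bound transfers to the $p$-adic limit. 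Applying the strict-inequality case of Lemma~\ref{lem:vpq} to the sum of the leading term and the tail limit then forces the total to have valuation exactly $m$, contradicting that it equals $0$.

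I do not anticipate a real obstacle: the argument is the standard construction of the $p$-adic expansion, driven entirely by Lemmas~\ref{lem:vpq} and~\ref{lem:ac}. The only place that requires any care is passing the lower valuation bound from the tail partial sums to their actual limit, but this is immediate from the ultrametric property of $|\cdot|_p$, since $p$-adic convergence means the valuation of the error term tends to $\infty$.
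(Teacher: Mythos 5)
Your proof is correct; the paper actually states this lemma without proof, presenting it as a standard consequence of Lemma~\ref{lem:ac} (with a citation to a textbook on $p$-adic numbers), and your argument --- greedily extracting digits via Lemma~\ref{lem:ac} for existence, and using the ultrametric equality case of Lemma~\ref{lem:vpq} at the first differing digit for uniqueness --- is exactly the intended standard derivation. The one step you flag, transferring the bound $v_p(\cdot)\geq m+1$ from the tail's partial sums to its limit, is indeed immediate since $\{z: v_p(z)\geq m+1\}$ is closed in the $p$-adic topology.
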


As usual, we let 
$$
\mathbb{Z}_{(p)} :=\mathbb{Z}_p\cap\mathbb{Q}=\left\{x\in\mathbb{Q}:v_p(x)\geq0\right\}=\left\{\frac{a}{b}:a,b\in\mathbb{Z},p\nmid b\right\},
$$ 
see Figure~\ref{fig:incl}.

\begin{figure}
\begin{center} 
\begin{tikzpicture}[scale=1.5]
  % Nodes
  \node (Qp) at (0,2) {$\mathbb{Q}_p$};
  \node (Zp1) at (-1,1) {$\mathbb{Z}_p$};
  \node (Q) at (1,1) {$\mathbb{Q}$};
  \node (Zp2) at (0,0) {$\mathbb{Z}_{(p)}$};
  \node (Z) at (0,-1) {$\mathbb{Z}$};
\node (R) at (2,2)
{$\mathbb{R}$};

  % Edges
  \draw (Zp1) -- (Qp);
  \draw (Q) -- (Qp);
  \draw (Zp1) -- (Zp2);
  \draw (Q) -- (Zp2);
  \draw (Zp2) -- (Z);
  \draw (Q) -- (R);
\end{tikzpicture}
\end{center} 
\caption{Inclusions between the number domains studied in this article.}
\label{fig:incl}
\end{figure}
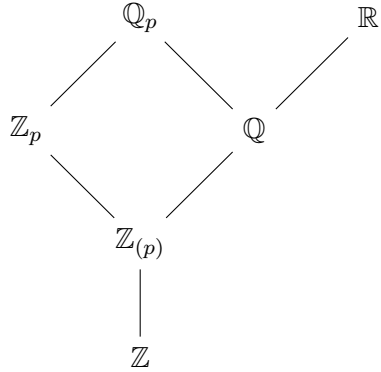

\subsection{The structure ${\mathfrak Q}_p$} 
It will be convenient for some of our results and proofs to take a logic perspective on the $p$-adic numbers; for an introduction to first-order logic, see~\cite{Hodges}. A \emph{signature} is a set $\tau$ of relation and function symbols, each equipped with an \emph{arity}, which is a natural number. 
A \emph{(first-order) structure} $\bS$ of signature $\tau$ consists of a set (the domain, typically denoted by the corresponding capital roman letter $S$), a function $f^{\bS} \colon S^k \to S$ for each function symbol $f \in \tau$ of arity $k \in {\mathbb N}$ (the case $k=0$ is allowed; in this case, we refer to $f$ as a \emph{constant symbol}), and a relation $R^{\bS} \subseteq S^k$ for each relation symbol $R \in \tau$ of arity $k$; we then say that $f$ \emph{denotes} $f^{\bS}$, and $R$ \emph{denotes} $R^{\bS}$.  

A \emph{reduct} of $\bS$ is a structure obtained from $\bS$ by taking a subset of the signature. If $\bR$ is a reduct of $\bS$, then $\bS$ is called an \emph{expansion} of $\bR$. 
A \emph{substructure} of $\bS$ is a structure 
$\bS'$ with the same signature $\tau$ as $\bS$ and domain $S' \subseteq S$ 
such that
 for every function symbol $f \in \tau$ of arity $k$,
 the function 
$f^{\bS'}$ is the restriction of $f^{\bS}$ to $(S')^k$, and for every relation symbol $R \in \tau$ of arity $k$, the relation $R^{\bS'}$ equals $R^{\bS} \cap (S')^k$.

A \emph{first-order $\tau$-formula} is a formula built from first-order quantifiers $\forall, \exists$, Boolean connectives $\wedge, \vee, \neg$, and atomic formulas that are built from variables, the equality symbol $=$, and the symbols from $\tau$ in the usual way; for a proper definition, we refer to any standard introduction to mathematical logic or model theory, such as~\cite{Hodges}.

\begin{remark}
    Often when $p$-adic numbers are treated from a logic perspective, they are introduced as `two-sorted structures', with one sort for the $p$-adic numbers and one sort for the values, i.e., ${\mathbb Z} \cup \{\infty\}$, and a function symbol $v$ for the valuation.
    For our purposes, however, usual first-order structures (as introduced above) are sufficient. 
\end{remark}

We work with the structure ${\mathfrak Q}_p$
which has the domain ${\mathbb Q}_p$ and the signature 
$$\{+,1\} \cup \{\leq^p_c,\geq^p_c ,=^p_c,\neq^p_c,\mid c \in {\mathbb Z}\},$$
where
\begin{itemize}
\item $+$ is a binary function symbol that denotes the addition operation of $p$-adic numbers as introduced above;  
\item $1$ is a constant symbol which denotes $1 \in {\mathbb Z}_{(p)} = {\mathbb Q}_p \cap {\mathbb Z}$ as introduced above;
\item $\leq^p_{c}$ is a unary  relation symbol that denotes the unary relation $\{ x \in {\mathbb Q}_p \mid v_p(x) \leq c\}$; 
$\geq^p_c$, $=^p_{c}$, and $\neq^p_c$ are defined analogously. 
\end{itemize}
Sometimes, we specify structures as tuples; e.g., we write 
$$ 
 {\mathfrak Q}_p = ({\mathbb Q}_p;+,1,(\leq^p_c)_{c \in {\mathbb Z}},(\geq^p_c)_{c \in {\mathbb Z}},(=^p_c)_{c \in {\mathbb Z}},(\neq^p_c)_{c \in {\mathbb Z}})
$$
and do not distinguish between function and relation symbols and the respective functions and relations. 
Atomic formulas
that are built from the relations $\leq_c^p$, $\geq_c^p$, $=_c^p$, and $\neq_c^p$ will be called \emph{valuation constraints}. 
For $c \in {\mathbb Z}$, we also use the symbols $<^p_{c}$ as a shortcut for $\leq_{c-1}^p$, and $>^p_{c}$ as a shortcut for $\geq_{c+1}^p$.

%\begin{remark} 
%Weispfenning~\cite{LinearWeispfenning} 
%uses a slightly more expressive
%structure which also contains
%the binary relation symbol $\div$, where $\div(a,b)$ holds if $v_p(a) < v_p(b)$. He proves that in this signature, the structure admits quantifier elimination \cite[Theorem 3.6]{LinearWeispfenning}. We do not need this result here. 
%\end{remark}

\subsection{Primitive Positive Formulas and CSPs}
%Let $\tau$ be a signature, i.e., a set of relation and function symbols. 
A formula is called \emph{primitive positive} if it is of the form 
$$\exists x_1,\dots,x_n (\psi_1 \wedge \cdots \wedge \psi_m)$$
where $\psi_1,\dots,\psi_m$ are atomic.
In \emph{primitive existential} formulas, 
$\psi_1,\dots,\psi_m$ are allowed to be negated atomic formulas as well, and \emph{existential} formulas are disjunctions of primitive existential formulas.
We use the concepts of 
primitive positive (and primitive existential, and existential) sentences, theories, definitions, definability, etc, as in the case of first-order logic (see, e.g.,~\cite{Hodges}), but restricting to primitive positive (primitive existential, and existential) formulas.

The computational problem of deciding the truth of a given primitive positive sentence $\varphi$ in a fixed structure $\bS$ is called the \emph{constraint satisfaction problem (CSP)} of $\bS$. 
We refer to the quantifier-free part $\psi_1 \wedge \dots \wedge \psi_m$ of $\varphi$ as the 
\emph{instance} of $\CSP(\bS)$ (i.e., the existential quantifiers will be left implicit), 
and a satisfying assignment to the variables will also be called a \emph{solution} to $\varphi$.

If the signature of $\bS$ is infinite, then the computational problem is not yet well-defined, because we still have to specify how to represent the symbols from the signature in the input;  the choice of the representation can have an impact on the complexity of the CSP. For the structure ${\bQ}_p$ introduced above, 
a natural representation is to represent the relation symbols $\leq_c^p$, $\geq^p_c$, 
$=_c^p$, and $\neq^p_c$ 
by the binary encoding of $p\in\mathbb{P}$ and $c \in {\mathbb Z}$.
Note that $v_p(x) \leq c$ 
holds if and only if
$v_p(x) \leq v_p(p^c)$. 
It will turn out that in all of our polynomial-time tractability results, 
it suffices to store $c$ in binary (which makes $p^c$ a doubly exponentially large number). 
The hardness results, however, always make use of only finitely many symbols in the signature, and hence hold independently from the choice of the representation. We will therefore allow binary representations for the values $c$ in the valuation constraints, since this allows the strongest formulations of our results. 

We will determine the computational complexity of $\CSP(\bS)$ for all
reducts of ${\mathfrak Q}_p$ (Theorem~\ref{thm:class} and~\ref{thm:class2}).

\subsection{Primitive positive interpretations}
Primitive positive interpretations can be used to obtain complexity reductions between CSPs. For $d \geq 1$, a $d$-dimensional \emph{primitive positive interpretation} of a structure
$\bA$ in a structure $\bB$ is given by a partial function $I$ from $B^d$ to $A$ 
such that the preimages under $I$ of the following sets are primitively  positively definable in $\bB$: 
\begin{itemize}
\item $A$ and the equality relation $=_A$ on $A$, 
\item each relation of $\bA$, and 
\item each graph of a function of $\bA$.
\end{itemize}

\begin{lemma}[see, e.g.,~{\cite[{Theorem 3.1.4}]{Book}}]\label{lem:pp-int}
    Let $\bA$ 
    be a structure with a finite signature and a primitive positive interpretation in a structure $\bB$. Then $\bB$ has a reduct $\bB'$
    with a finite signature such that there is a polynomial-time reduction from $\CSP(\bA)$ to $\CSP(\bB')$.
\end{lemma}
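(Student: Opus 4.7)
The plan is a standard translation of an instance of $\CSP(\bA)$ into an instance of $\CSP(\bB')$, variable-by-variable and atom-by-atom, using the pp-formulas supplied by the interpretation.

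First I would fix pp-formulas in $\bB$ for the objects guaranteed by the definition: a formula $\delta(y_1,\ldots,y_d)$ defining $I^{-1}(A)$, a formula $\varepsilon(\bar y,\bar y')$ defining $I^{-1}(=_A)$, a formula $\rho_R(\bar y^1,\ldots,\bar y^k)$ defining $I^{-1}(R)$ for each relation symbol $R$ of $\bA$, and a formula $\varphi_f(\bar y^1,\ldots,\bar y^k,\bar y^{k+1})$ defining the preimage of the graph of $f$ for each function symbol $f$ of $\bA$. Since the signature of $\bA$ is finite, this is a finite list of pp-formulas, each of fixed constant size. Let $\bB'$ be the reduct of $\bB$ to the (finite) set of symbols that actually occur in these formulas.

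Next I would describe the reduction. Given an instance $\Phi=\bigwedge_{j}\psi_{j}$ of $\CSP(\bA)$ with variables $x_1,\ldots,x_n$, introduce for each $i$ a block of $d$ fresh variables $x_i^1,\ldots,x_i^d$ ranging over $B$. Conjoin $\delta(x_i^1,\ldots,x_i^d)$ for every $i$ to enforce that each block codes an element of $A$. Translate each atomic conjunct $\psi_{j}$ as follows: an equality $x_i=x_{i'}$ becomes $\varepsilon(\bar x_i,\bar x_{i'})$; an atom $R(x_{i_1},\ldots,x_{i_k})$ becomes $\rho_R(\bar x_{i_1},\ldots,\bar x_{i_k})$; an atom $f(x_{i_1},\ldots,x_{i_k})=x_{i_{k+1}}$ becomes $\varphi_f(\bar x_{i_1},\ldots,\bar x_{i_{k+1}})$; and a general atom with nested function terms is first flattened by introducing auxiliary variables for subterms, which increases the instance size only by a linear factor. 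The whole output $\Phi'$ is a conjunction of atomic $\bB'$-formulas (after dropping the existential quantifiers over the auxiliary variables, which are part of the CSP input format).

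Correctness is by direct verification using the definition of the interpretation: an assignment of $A$-elements satisfying $\Phi$ lifts, via any choice of preimage tuples under $I$, to an assignment of $B$-elements satisfying $\Phi'$; conversely, an assignment of $B$-elements satisfying $\Phi'$ satisfies $\delta$ on every block, so the blocks code $A$-elements via $I$, and the translation clauses force these $A$-elements to satisfy $\Phi$ in $\bA$. Since each atomic conjunct of $\Phi$ is replaced by a pp-formula of constant size, the reduction runs in linear time.

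There is no real obstacle beyond bookkeeping; the only mildly subtle point is handling the function symbols of $\bA$ via their graphs and flattening nested terms, which is why we invoke pp-definability of graphs in the hypothesis rather than just of relations.
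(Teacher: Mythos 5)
Your proof is correct and is essentially the standard argument: the paper does not prove this lemma itself but defers to the cited reference (Theorem 3.1.4 of \cite{Book}), and your translation---blocks of $d$ fresh variables per $\bA$-variable, the domain formula $\delta$ on each block, substitution of the fixed pp-definitions for atoms, flattening of nested function terms, and absorbing the existential quantifiers of the pp-formulas into the instance---is exactly the reduction given there. The only point worth stating explicitly is that lifting a satisfying assignment from $\bA$ back to $\bB$ requires $I$ to be surjective onto $A$, which is part of the intended definition of an interpretation (the preimage of $A$ is the domain of $I$ and $I$ maps onto $A$).
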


\section{$\mathbb{Q}$ versus $\mathbb{Q}_p$}

\noindent
Note that the structure ${\mathfrak Q}_p$ has a substructure with domain ${\mathbb Q}$.
All our algorithms  in Section \ref{sec:algorithms} and hardness proofs in Section \ref{sec:hard} 
can be stated equivalently over the uncountable field ${\mathbb Q}_p$ or over $\mathbb{Q}$. This is possible due to the following fact, which is a consequence of a result of Weispfenning~\cite{LinearWeispfenning}.
%In the proof, we work with a slightly more expressive
%structure which also contains
%the binary relation symbol $\div$, where $\div(a,b)$ holds if $v_p(a) < v_p(b)$.

\begin{proposition}\label{prop:qvsqp}
For every $p\in\mathbb{P}$, the structure ${\mathfrak Q}_p$ and its substructure with domain ${\mathbb Q}$ have the same first-order theory.
\end{proposition}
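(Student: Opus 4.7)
The plan is to combine Weispfenning's quantifier elimination for the linear theory of $p$-adic fields \cite{LinearWeispfenning} with the Tarski--Vaught test to show that the substructure of $\mathfrak{Q}_p$ with domain $\mathbb{Q}$ is in fact an \emph{elementary} substructure, whence the claimed equality of theories is immediate.

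First, I would record that Weispfenning's quantifier elimination transfers to our signature: scalar multiplication by any integer $n$ is definable via iterated addition of the constant $1$, and the atoms $\leq^p_c, \geq^p_c, =^p_c, \neq^p_c$ correspond to the natural valuation predicates in his language (using $v_p(x) \leq c \Leftrightarrow v_p(x) \leq v_p(p^c)$ and analogously for the other three). Consequently every first-order formula $\varphi(x, \bar{y})$ in the signature of $\mathfrak{Q}_p$ is $\mathfrak{Q}_p$-equivalent to a quantifier-free formula $\psi(x, \bar{y})$ in this same signature.

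Next, I would verify the Tarski--Vaught criterion: given $\bar{a} \in \mathbb{Q}^{|\bar{y}|}$ and a quantifier-free $\psi$ such that $\mathfrak{Q}_p \models \exists x\,\psi(x, \bar{a})$, I want a witness $b \in \mathbb{Q}$. Putting $\psi$ in disjunctive normal form and focusing on a satisfied disjunct, we face a conjunction of (negated) atoms each built from a linear form $t(x) = c_1 x + c_0$ with $c_0, c_1 \in \mathbb{Q}$ (since the parameters $\bar{a}$ and the constant $1$ are rational). If the conjunction contains some equation $t(x) = 0$ with $c_1 \neq 0$, then its solution set is the singleton $\{-c_0/c_1\} \subseteq \mathbb{Q}$ and we are done. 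Otherwise, every non-trivial atom is either an inequation $t(x) \neq 0$ or a valuation constraint on a linear form; since $\mathbb{Q}_p$ is an ultrametric space, every set of the form $\{x : v_p(t(x)) \, \square \, c\}$ is clopen, so the solution set of the conjunction is a non-empty open subset of $\mathbb{Q}_p$ and meets $\mathbb{Q}$ by the density of $\mathbb{Q}$ in $\mathbb{Q}_p$.

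The main obstacle will be the bookkeeping needed to align Weispfenning's precise signature with ours, in particular ensuring that no auxiliary symbols (such as congruence predicates on the value group) appear in the quantifier-free $\psi$ outside the expressive power of our language, and carrying out the dichotomy above cleanly when some linear forms degenerate (i.e.\ when $c_1 = 0$ so that the atom reduces to a constraint on the parameters alone). Once these routine points are in place, the Tarski--Vaught test yields that $\mathbb{Q}$ is an elementary substructure of $\mathfrak{Q}_p$, and the proposition follows.
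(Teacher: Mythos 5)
Your overall strategy (quantifier elimination plus a Tarski--Vaught/density argument) is a genuinely different route from the paper's, and its topological core is sound, but the step you dismiss as ``routine bookkeeping'' is exactly where the argument currently breaks. Weispfenning's theorem produces quantifier-free formulas in his language $\sigma=\{+,0,1,\pi,\mathrm{div}\}$, whose atoms include the \emph{binary} comparison $v_p(t)<v_p(s)$ between two terms. Your claim that every formula is $\mathfrak{Q}_p$-equivalent to a quantifier-free formula ``in this same signature'' does not follow: a quantifier-free formula in the signature of $\mathfrak{Q}_p$ is a Boolean combination of finitely many constraints of the form $v_p(ax+by+c)\bowtie d$ with fixed $d\in\mathbb{Z}$, and for arguments of sufficiently negative valuation (e.g.\ the pairs $(p^{-N},p^{-2N})$ and $(p^{-2N},p^{-N})$ with $N$ large) all such constraints take the same truth value while $v_p(x)<v_p(y)$ does not; so the $\mathrm{div}$-atoms appearing in the QE output cannot in general be rewritten quantifier-freely in your signature. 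The repair is not to force the formula back into the $\mathfrak{Q}_p$-signature but to run the entire Tarski--Vaught argument in $\sigma$: translate $\mathfrak{Q}_p$-formulas into $\sigma$-formulas, eliminate quantifiers there, and then extend your case analysis to cover $\mathrm{div}$-atoms $v_p(t(x))<v_p(s(x))$ \emph{and their negations} between linear forms in the quantified variable. Both do define open subsets of $\mathbb{Q}_p$ (the only delicate points are the zeros of $t$ and $s$, where one valuation jumps to $\infty$), so the density argument goes through, but this verification is a real step, not bookkeeping, and it is missing.

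It is also worth noting that once you are working over $T_{\mathrm{DVF}_p}$, of which both $\mathbb{Q}$ and $\mathbb{Q}_p$ are models, the density argument is dispensable: quantifier elimination for a theory implies that any submodel is an elementary submodel (quantifier-free formulas are absolute between substructures), and for sentences one does not even need that --- the quantifier-free equivalent of a sentence has no free variables and its truth value is decided by the theory. This is the paper's proof, which is shorter. Your route has the minor advantage of only using quantifier elimination over $\mathbb{Q}_p$ itself (it does not need $\mathbb{Q}\models T_{\mathrm{DVF}_p}$), at the cost of the topological analysis; either way, the gap identified above must be closed.
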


\begin{proof}
%ARNO'S SUGGESTION:
Let $\tau$ be the signature $\{+,\cdot,0,1,\pi,{\rm div}\}$ 
where $\pi$ is a constant symbol and ${\rm div}$
is a binary relation symbol.
{Weispfenning~\cite{LinearWeispfenning} 
introduces a certain first-order $\tau$-theory, which he calls $T_{{\rm DVF}_p}$; both $\mathbb{Q}$ and $\mathbb{Q}_p$ give rise to models of $T_{{\rm DVF}_p}$
if $\pi$ is interpreted as $p$
and $a\;{\rm div}\;b$ if and only if $v_p(a) < v_p(b)$.
He then proves that $T_{{\rm DVF}_p}$ 
admits quantifier elimination for {\em linear} formulas~\cite[Theorem 3.6]{LinearWeispfenning}.
That is, every $\sigma$-formula, for $\sigma := \{+,0,1,\pi,{\rm div}\}$ (where the symbol for multiplication is missing, which is why these formulas are called `linear'), is over $T_{{\rm DVF}_p}$ equivalent to a quantifier-free $\sigma$-formula.}
Clearly, every atomic formula in the signature of ${\bQ}_p$ can be defined by a $\sigma$-formula 
over ${\mathbb Q}_p$. Let $\varphi$ be a first-order sentence in the signature of $\bQ_p$,
and
let $\varphi'$ be the first-order $\sigma$-sentence 
obtained from $\varphi$ by replacing all atomic formulas  by their defining $\sigma$-formula. 
Then $\varphi'$ is either equivalent to $0=0$ over $T_{{\rm DVF}_p}$
or it is equivalent to $0=1$ over $T_{{\rm DVF}_p}$. It follows that
either both ${\bQ}_p$ and its substructure with domain ${\mathbb Q}$ satisfy $\varphi$,
or both ${\bQ}_p$ and its substructure with domain ${\mathbb Q}$ satisfy $\neg \varphi$, which is what we wanted to show.
%ORIGINAL:
%    Let $\tau$ be the signature $\{+,\cdot,0,1,\pi,\div\}$ and let $T_{\DVFp}$ be the $\tau$-theory  of discretely valued fields with residue class field of characteristic 
%    $p \in {\mathbb P} \cup \{0\}$;    here     $\pi$ is a field constant such that $v_p(\pi) = 1$ and $\div$ is a strict linear divisibility relation  
%     where $\div(a,b)$ holds if $v_p(a) < v_p(b)$ 
%    (see~\cite{key689765m}).  
%       Clearly, both ${\mathbb Q}_p$ and ${\mathbb Q}$, when vieved as $\tau$-structures, satisfy $T_{\DVFp}$.  
%Weispfenning proves that $T_{\DVFp}$
%admits quantifier elimination for \emph{linear} formulas~\cite[Theorem 3.6]{LinearWeispfenning}. 
%That is, every $\sigma$-formula, for $\sigma := \{+,0,1,\pi,\div\}$ (where the symbol for multiplication is missing, which is why these formulas are called `linear'), is over $T_{\DVFp}$ equivalent to a quantifier-free $\sigma$-formula. 
%Clearly, every atomic formula in the signature of ${\bQ}_p$ can be defined by a $\sigma$-formula 
%over ${\mathbb Q}_p$. Let $\phi$ be a first-order sentence in the signature of $\bQ_p$;
%let $\phi'$ be the first-order $\sigma$-sentence 
%obtained from $\phi$ by replacing all atomic formulas of by their defining $\sigma$-formula. 
%Then $\phi'$ is either equivalent to $0=0$ over $T_{\DVFp}$
%or it is equivalent to $0=1$ over $T_{\DVFp}$. It follows that
%either both ${\bQ}_p$ and its substructure with domain ${\mathbb %Q}$ satisfy $\phi$,
%or both ${\bQ}_p$ and its substructure with domain ${\mathbb Q}$ %satisfy $\neg \phi$, which is what we wanted to show.
\end{proof}

\begin{corollary}\label{cor:NP}
For each $p\in\mathbb{P}$,
the existential theory of $\mathfrak{Q}_p$
and
the existential theory of the expansion of $({\mathbb Q};+,1)$
by all relations of the form $\leq_c^p$, $\geq_c^p$, $=_c^p$ and $\neq_c^p$, for $c \in {\mathbb Z}$, are in NP. 
\end{corollary}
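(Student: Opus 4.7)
The plan is to reduce the corollary to a single structure via Proposition~\ref{prop:qvsqp} and then invoke the NP algorithm of Guépin, Haase, and Worrell~\cite{GHW19}. Since Proposition~\ref{prop:qvsqp} asserts that $\mathfrak{Q}_p$ and its substructure with domain $\mathbb{Q}$ are elementarily equivalent, their existential theories coincide as sets of sentences; so NP containment for one of the two structures immediately yields NP containment for the other. I would therefore focus on the (uncountable) structure $\mathfrak{Q}_p$.

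Next, I would describe a non-deterministic polynomial-time reduction from the existential theory of $\mathfrak{Q}_p$ to the CSP studied in~\cite{GHW19}. Given an existential sentence in the signature $\{+,1\}\cup\{\leq_c^p,\geq_c^p,=_c^p,\neq_c^p : c\in\mathbb{Z}\}$, the machine first non-deterministically selects a disjunct of the matrix written in disjunctive normal form, so that the remaining task is a primitive existential formula, i.e., a conjunction of (possibly negated) atoms. It then pushes each negation into the atom: $\neg\leq_c^p$ is replaced by $\geq_{c+1}^p$, $\neg\geq_c^p$ by $\leq_{c-1}^p$, $\neg=_c^p$ by $\neq_c^p$, and $\neg\neq_c^p$ by $=_c^p$. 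The relation $\neq_c^p$ is then expanded to the disjunction $\leq_{c-1}^p\vee\geq_{c+1}^p$, and again a disjunct is guessed. Finally, each surviving valuation atom $\leq_c^p(x)$ or $\geq_c^p(x)$, as well as each linear disequation $\ell\neq 0$, is replaced by $=_d^p(x)$ (respectively $=_d^p(\ell)$) for a guessed integer $d$. What remains after all of these non-deterministic choices is exactly the satisfiability problem for systems of linear equations with $=_c^p$ constraints, which by~\cite{GHW19} is in NP even when the constants are encoded in binary.

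The hard part will be controlling the size of the value $d$ guessed in the final step: a priori, a linear disequation $\ell\neq 0$ or an atom $\leq_c^p(x)$ could in principle be witnessed only by a $p$-adic number whose valuation has an astronomically large absolute value, which would destroy the polynomial bound on the certificate. I would resolve this by appealing to the same small-witness analysis that~\cite{GHW19} employ to show that their linear existential theory over $\mathbb{Q}_p$ is in NP: for any satisfiable conjunction of atoms in our signature there exists a solution in which each $v_p(x_i)$ either equals one of the constants appearing in the input or has absolute value polynomially bounded in the input size, so the non-deterministic guess of $d$ is of polynomial size. Combining this polynomial-size certificate with the reduction above yields NP containment for the existential theory of $\mathfrak{Q}_p$, and hence, via Proposition~\ref{prop:qvsqp}, also for the substructure with domain $\mathbb{Q}$.
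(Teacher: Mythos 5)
Your first step is exactly the paper's: Proposition~\ref{prop:qvsqp} identifies the two existential theories, so it suffices to treat one structure. Where you diverge is in how you invoke \cite{GHW19}. The paper simply cites \cite[Proposition 21]{GHW19}, which already states that the existential theory of $\mathbb{Q}_p$ in a \emph{more expressive} language is in NP (with constants in binary); since every atomic formula of $\mathfrak{Q}_p$ is expressible there, nothing further is needed. You instead build a nondeterministic reduction down to the narrower problem of linear systems with $=_c^p$ constraints, guessing disjuncts, rewriting negations, splitting $\neq_c^p$ into $\leq_{c-1}^p\vee\geq_{c+1}^p$, and finally guessing a concrete valuation $d$ for each $\leq_c^p$, $\geq_c^p$, and linear disequation. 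That machinery is sound as far as it goes (and guessing disjuncts is harmless here since the paper's existential formulas are by definition disjunctions of primitive existential formulas, so no DNF blow-up arises), but the step you yourself identify as ``the hard part''--- the polynomial bound on the guessed $d$ --- is precisely the content of the NP-membership result, and you do not prove it; you appeal to ``the same small-witness analysis that \cite{GHW19} employ,'' i.e., to their proof technique rather than to a stated theorem. As written, your argument is therefore a partial re-derivation of \cite[Proposition 21]{GHW19} whose essential lemma is left as a black box. The fix is simply to cite that proposition directly, which collapses your second and third paragraphs into one sentence and is what the paper does.
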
 
\begin{proof}
By Proposition~\ref{prop:qvsqp},
these two existential theories are equal,
so the claim follows from
\cite[Proposition 21]{GHW19},
where it is proven that the existential theory of $\mathbb{Q}_p$ in a more expressive language is in NP.
\end{proof}

\section{Algorithms}
\label{sec:algorithms}

\noindent
We first discuss how to measure the size of input instances to the computational problems studied in this text. 
For $a,b\in\mathbb{N}\setminus \{0\}$ coprime, define $h(\pm\frac{a}{b}) := 1+\log|a|+\log|b|$, and define $h(0) := 1$.
Occasionally we might allow special coefficients like $\infty$ or $-\infty$; we set $h(\infty) =h(-\infty) :=1$. 
For matrices $A_1,\dots,A_r$ with coefficients in $\mathbb{Q}$ we let
$$
 C(A_1,\dots,A_r) := s + \sum_{k=1}^r\sum_{i,j}h(a_{kij}),
$$
where $s$ is the maximal number of rows or columns of one of the $A_k = (a_{kij})_{i,j}$. 
This is our measure of size of a computational problem that is given by a set of rational matrices.
A rational number $p$ in the input is interpreted as the matrix $p\in\mathbb{Q}^{1\times1}$,
and a finite set $D=\{d_1,\dots,d_r\}\subseteq\mathbb{Q}$ 
is interpreted
as the matrix
$D=(d_1,\dots,d_r)\in\mathbb{Q}^{1\times r}$.
For example, the input size of the algorithm in Proposition~\ref{prop:alg-leq} below is $C(A,b,p,c,D_1,\dots,D_n)$.

We now present two algorithms.
The first one, essentially for constraints of the form $v_p(x)\leq c$,
is straightforward.
In both settings, among all such valuation constraints on the same variable, there is a most restrictive one, which can easily be identified (in polynomial time), and therefore our algorithms are only formulated for one valuation constraint of the form $v_p(x)\leq c$ (or of the form $v_p(x)\geq c$) per variable.

\begin{proposition}\label{prop:alg-leq}
There is a polynomial time algorithm 
that decides,
given
$m,n\in\mathbb{N}$,
$p\in\mathbb{P}$,
$c\in(\mathbb{Z}\cup\{\infty\})^n$,
$A\in \mathbb{Q}^{m\times n}$, $b\in\mathbb{Q}^m$,
and finite sets
$D_1,\dots,D_n\subseteq\mathbb{Z}$,
whether there exists
$x\in\mathbb{Q}^n$ with
$Ax=b$ such that
$v_p(x_j)\leq c_j$ and
$v_p(x_j)\notin D_j$ for $j=1,\dots,n$.
\end{proposition}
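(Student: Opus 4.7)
The plan is to reduce the entire problem to a finite check on the coordinates whose value is forced by $Ax=b$ to be constant. The intuition is that any single valuation constraint $v_p(x_j) \leq c_j$ with $v_p(x_j) \notin D_j$ (for finite $D_j$) is satisfied by any $x_j$ of sufficiently negative valuation, and the non-constant coordinates of a point on $V := \{x \in \mathbb{Q}^n : Ax = b\}$ can be driven to arbitrarily negative valuation simultaneously.

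I would first apply standard Gaussian elimination over $\mathbb{Q}$ in polynomial time in $C(A,b)$ to decide whether $V$ is empty; if so, output ``no''. Otherwise, compute a particular solution $x^{(0)} \in \mathbb{Q}^n$ and a basis $w_1,\dots,w_k$ of $\ker A$, all of polynomial bit size in the input. Partition the coordinates into the ``forced'' set $J_0 := \{j : (w_i)_j = 0 \text{ for all } i\}$ and the ``free'' set $J_1 := \{1,\dots,n\} \setminus J_0$. For each $j \in J_0$, the value $x_j = (x^{(0)})_j$ is constant on $V$: compute $v_p((x^{(0)})_j)$ by counting factors of $p$ in numerator and denominator, and verify $v_p((x^{(0)})_j) \leq c_j$ and $v_p((x^{(0)})_j) \notin D_j$. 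If any such check fails, output ``no''; otherwise output ``yes''.

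The ``no'' direction of correctness is immediate. For ``yes'', I need to show that whenever every $j \in J_0$ passes the checks, the full system is satisfiable. For each $j \in J_1$ the linear form $\alpha \mapsto \sum_{i=1}^k \alpha_i (w_i)_j$ on $\mathbb{Q}^k$ is non-zero, so its vanishing locus is a proper subspace. A finite union of proper subspaces of $\mathbb{Q}^k$ cannot cover it (e.g.\ a Vandermonde vector $\alpha = (1,M,M^2,\dots,M^{k-1})$ avoids them for a suitable integer $M$), yielding $\alpha \in \mathbb{Q}^k$ with $\beta_j := \sum_i \alpha_i (w_i)_j \neq 0$ for every $j \in J_1$. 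Set $x := x^{(0)} + p^{-N} \sum_i \alpha_i w_i \in V$ for a large integer $N$. For $j \in J_0$ we have $x_j = (x^{(0)})_j$, which is allowed by assumption. For $j \in J_1$ we have $x_j = (x^{(0)})_j + p^{-N} \beta_j$; once $N > v_p(\beta_j) - v_p((x^{(0)})_j)$ (with the convention $v_p(0) = \infty$), the strict triangle inequality from Lemma~\ref{lem:vpq} yields $v_p(x_j) = -N + v_p(\beta_j)$. Choosing $N$ even larger than $\max_{j \in J_1}(v_p(\beta_j) - \min(c_j, \min D_j))$ (with $\min\emptyset = +\infty$) then forces $v_p(x_j) \leq c_j$ and $v_p(x_j) < \min D_j$, so $v_p(x_j) \notin D_j$, for every $j \in J_1$.

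The remaining point is purely bookkeeping and is not a real obstacle: every step must run in polynomial time in $C(A,b,p,c,D_1,\dots,D_n)$. Gaussian elimination over $\mathbb{Q}$ is standard and keeps bit sizes polynomial, so $x^{(0)}$ and the $w_i$ are of polynomial bit size. For a rational $r$ of polynomial bit size, $|v_p(r)|$ is bounded by that bit size, so $v_p(r)$ can be computed by polynomially many trial divisions by $p$ and then compared in binary against $c_j$ (also in binary) and looked up against the explicitly listed finite set $D_j$. Crucially, the algorithm only has to decide satisfiability, so the large $N$ used in the correctness proof never has to be materialized: the existence argument above suffices to justify the ``yes'' output.
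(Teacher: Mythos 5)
Your proposal is correct and follows essentially the same route as the paper: decide emptiness of the affine solution set, check the valuation constraints only on the coordinates that are constant on it, and otherwise certify a YES answer by perturbing a particular solution along the kernel so that every non-constant coordinate acquires arbitrarily negative valuation, using the equality case of the ultrametric inequality. The only (immaterial) difference lies in the witness: the paper takes $y_0+\sum_{k=1}^d p^{-2ke}y_k$ with exponents spaced so that in each coordinate the summands have pairwise distinct valuations, whereas you first choose a single generic kernel vector whose free coordinates are all nonzero and then scale it by $p^{-N}$.
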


\begin{proof}
Let $L :=\{x\in\mathbb{Q}^n:Ax=b\}$ be the solution space of the system of linear equations.
If $L=\emptyset$, the algorithm outputs NO.
Otherwise write
\begin{equation}\label{eqn:solution_space}
     L = \left\{y_0+\sum_{k=1}^d\lambda_ky_k : \lambda_1,\dots,\lambda_d\in\mathbb{Q} \right\}
\end{equation}
with $y_1,\dots,y_d\in {\mathbb Q}^{n}$ linearly independent.
One can check whether $L=\emptyset$
and otherwise compute such $d \in {\mathbb N}$ and $y_0,\dots,y_d$ in polynomial time: 
It is possible to compute 
one solution $y_0 \in {\mathbb Q}^{n}$ of $Ax=b$ in polynomial time \cite[Corollary 3.3a]{Schrijver}.
Moreover, we can transform $A$ by elementary row operations into a matrix $A'$ in row echelon form in polynomial time \cite[Theorem 3.3]{Schrijver},
and from $A'$ we can read off 
the rank $d$ of $A$
and a basis $y_1,\dots,y_d$ of 
$$\{x\in\mathbb{Q}^n:Ax=0\}=\{x\in\mathbb{Q}^n:A'x=0\}.$$

If $c_j=\infty$ let $C_j=(\mathbb{Z}\cup\{\infty\})\setminus D_j$,
otherwise
let $C_j=(-\infty,c_j]\setminus D_j$,
so that the algorithm has to decide whether there exists $x\in L$ with $v_p(x_j)\in C_j$ for every $j$.
If for some $j$ we have that  $v_p(y_{0,j})\notin C_j$ and $y_{k,j}=0$ for every $k=1,\dots,d$,
then every $x\in L$ satisfies
$v_p(x_j)=v_p(y_{0,j})\notin C_j$, and the algorithm outputs NO.
Otherwise, the algorithm outputs YES.
To see that this is the correct answer, assume now that for every $j$
we have $v_p(y_{0,j})\in C_j$ or 
$y_{k,j}\neq 0$ for some $k$.
Let $c_j'=\sup(\mathbb{Z}\setminus C_j)\in\mathbb{Z}\cup\{\infty\}$,
where we set $c_j' := \infty$ if $C_j=\mathbb{Z}\cup\{\infty\}$, 
and let
$$
 e := \max\{|v_p(y_{k,j})|:k=0,\dots,d;j=1,\dots,n;y_{k,j}\neq0\}+\max\{0,-c'_1,\dots,-c'_n\}+1.
$$ 
We claim that 
$$
 x:=y_0+\sum_{k=1}^d p^{-2ke}y_k
$$ 
is a solution to all the constraints.
For each $j$
let 
$$
 K_j=\{k\in\{0,\dots,d\}:y_{k,j}\neq 0\}.
$$ 
If $K_j\setminus\{0\}=\emptyset$, then, by our assumption,
$v_p(x_j) =v_p(y_{0,j})\in C_j$.
Otherwise,
$$
 -e(2k+1) = -2ke-e< v_p(p^{-2ke}y_{k,j}) <-2ke+e = -e (2k-1)
$$ 
for every $k\in K_j$,
so that the $v_p(p^{-2ke}y_{k,j})$ for $k\in K_j$
are pairwise distinct,
and therefore,
with $k_j:=\max K_j$,
\begin{align*} v_p(x_j) & =v_p\Big(\sum_{k=0}^{k_j}p^{-2ke}y_{k,j}\Big)=-2k_je+v_p(y_{k_j,j})<c'_j
\end{align*}
by the choice of $e$. 
This shows in particular that $v_p(x_j)\in C_j$,
as required.
\end{proof}

\begin{remark}
    We might not be able to compute a solution in the usual binary representation, as already for the single  constraint $v_p(x)\leq c$ the smallest solution (with respect to the $p$-adic absolute value $|x|_p := p^{-v_p(x)}$) %for input size $c$ 
    is $p^{-c}$. 
The algorithm not only works for 
the $p$-adic valuation on $\mathbb{Q}$
but for arbitrary so-called discrete valuations 
on a computable field $K$ in which a solution of a given linear equation,
a basis for the solution space of a homogeneous linear equation,
and the valuation of an element
can be computed; the resulting algorithm has a  polynomial running time if these computations can be performed in polynomial time.
\end{remark}

For our second algorithm we need some preparations.
As the algorithm achieves a stronger result, we just mention without proof that the usual Hermite normal form allows to check in polynomial time whether $Ax=b$ has a solution $x\in\mathbb{Z}_{(p)}^n$ (see, e.g.,~\cite[{Chapter 5}]{Schrijver}). 
However, already checking for solutions $x$ with $x_j\in\mathbb{Z}_{(p)}$ for $1\leq j\leq r$ and $x_j\in\mathbb{Q}$ for $r+1\leq j\leq n$ requires new ideas.
Also,
if we want to allow constraints 
of the form $v_p(x_j)\geq c_j$ rather than just $v_p(x_j)\geq 0$,
one could replace $x_j$ by $x_jp^{-c_j}$, 
but only as long as $p^{c_j}$ is polynomial in the input size.
This would be the case if the $c_j$ would be coded in unary,
but if the $c_j$ are coded in binary, as is our convention (see above),
replacing $x_j$ by $x_jp^{-c_j}$ will blow up the coefficients of the linear equation exponentially.
We therefore do {\em not} replace $x_j$ by $x_jp^{-c_j}$ but instead do some extra bookkeeping, exploiting the fact that although we might not be able to compute finite sums of elements of the form
$x_jp^{c_j}$ in polynomial time,
we can at least compute their value. 

\begin{lemma}\label{lem:computeval}
There is a polynomial-time algorithm which,
given $p\in\mathbb{P}$, $n\in\mathbb{N}$, and pairs 
$(a_1,c_1),\dots,$
$(a_n,c_n)\in\mathbb{Q}\times\mathbb{Z}$, computes
$v_p(\sum_{i=1}^na_ip^{c_i})\in\mathbb{Z}\cup\{\infty\}$.
\end{lemma}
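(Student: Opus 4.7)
The plan is to first reduce to the case where the $a_i$ are integers, and then to run an iterative \emph{normalize-and-merge} algorithm using a priority queue.

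For the reduction, write each nonzero $a_i=\alpha_i/\beta_i$ in lowest terms and set $M:=\prod_i(\beta_i/p^{v_p(\beta_i)})$; this is a positive integer coprime to $p$ of bit-size polynomial in the input. Multiplying through yields $M\sum_ia_ip^{c_i}=\sum_ib_ip^{c_i'}$ for certain $b_i\in\mathbb{Z}$ and $c_i'=c_i-v_p(\beta_i)\in\mathbb{Z}$ of polynomial bit-size, and since $v_p(M)=0$ Lemma~\ref{lem:vpq} gives $v_p(\sum_ia_ip^{c_i})=v_p(\sum_ib_ip^{c_i'})$. Pairs with $a_i=0$ are simply discarded at the start.

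The core algorithm then works entirely in $\mathbb{Z}$. First, normalize each pair $(b_i,c_i')$ by replacing it with $(b_i/p^{v_p(b_i)},c_i'+v_p(b_i))$, so the first coordinate becomes a unit of $\mathbb{Z}_{(p)}$; here $v_p$ of an integer is computed in polynomial time by trial division. Insert all resulting pairs into a min-heap keyed by the second coordinate. While the heap contains at least two elements, extract the two smallest $(u_1,d_1)$ and $(u_2,d_2)$: if $d_1<d_2$, return $d_1$, since the ultrametric part of Lemma~\ref{lem:vpq} with a unique minimum among all still-pending summands gives that this is the valuation of the whole remaining sum; if $d_1=d_2$, set $u:=u_1+u_2$, discard both summands when $u=0$, and otherwise normalize and reinsert $(u/p^{v_p(u)},d_1+v_p(u))$. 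On an empty heap return $\infty$; on a heap of size one return its second coordinate. Correctness at each step follows from Lemma~\ref{lem:vpq}.

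I expect the main obstacle to be the bit-size analysis, which is precisely why the initial reduction to integers is crucial: adding two integers increases the bit-length by at most one, whereas adding two reduced fractions would roughly double the common denominator and iterate to exponential blow-up. Each merge strictly decreases the heap size, so there are at most $n-1$ merges in total; hence the bit-length of every coefficient grows by at most $n$ over the entire execution, while each second coordinate grows by at most the bit-length of its associated coefficient per merge, again staying polynomial. Since every primitive operation is polynomial in the current bit-length, the overall running time is polynomial.
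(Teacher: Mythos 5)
Your algorithm is correct and is essentially the paper's proof: normalize each pair so that the coefficient is a $p$-adic unit, then repeatedly merge the terms of minimal exponent, outputting that exponent as soon as it is uniquely attained (by the equality case of Lemma~\ref{lem:vpq}); termination in at most $n$ merges and the polynomial bound on the exponents and coefficients go through exactly as you argue. The only inaccuracy is your stated motivation for first clearing denominators: there is no exponential blow-up if one works with rationals directly (as the paper does), because every intermediate coefficient is, up to a power of $p$, a sum of a subset of the original $a_i$, so its denominator divides the product of the original denominators and its bit-length stays bounded by the sum of the original bit-lengths. Your integer reduction is therefore harmless but unnecessary.
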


\begin{proof}
First remove all $(a_i,c_i)$ with $a_i=0$ from the list.
If $n=0$ then output $\infty$.
Replace each $(a_i,c_i)$ by $(a_ip^{-v_p(a_i)},c_i+v_p(a_i))$
to assume that $v_p(a_i)=0$.
Let $c=\min_ic_i$. If there exists a unique $i_0$ with $c=c_{i_0}$, then output $v_p(\sum_ia_ip^{c_i})=c$.
Otherwise assume without loss of generality that $c=c_1=c_2$.
Then $a_1p^{c_1}+a_2p^{c_2}=(a_1+a_2)p^c$.
Remove $(a_1,c_1)$ and $(a_2,c_2)$ from the list and append $(a_1+a_2,c)$.
Repeating this process will terminate after at most $n$ steps.
\end{proof}

We also need a certain 
row echelon form. Before we give the definition, we present two motivating examples. 

\begin{example}
Suppose we want check whether a linear equation
\begin{align}
    a_1x_1 + \dots + a_nx_n = b  \label{eq:lin}
\end{align}
with $a_1,\dots,a_n,b\in\mathbb{Q}$
has a solution $x\in\mathbb{Q}^n$ with $v_p(x_j)\geq 0$ for every $j \in \{1,\dots,n\}$.
Such an $x$ exists if and only if $v_p(b) \geq \min_jv_p(a_j)$:
For any such $x$, 
\begin{align*}
v_p(b) = 
v_p(a_1x_1+\dots+a_nx_n) & \geq\min\{v_p(a_1)+v_p(x_1),\dots,v_p(a_n)+v_p(x_n)\}\\
& \geq \min_j v_p(a_j),
\end{align*}
and conversely, if $v_p(a_{j_0}) \leq v_p(b)$ for some $j_0$, we can let $x_{j_0} := a_{j_0}^{-1}b$ and set the other $x_j$ to $0$
(unless $a_{j_0}=0$, in which case $b=0$ and we can let $x=0$).
\end{example}

\begin{example}\label{expl:lineq2}
Suppose we are 
given a nonempty set $X\subseteq\mathbb{Z}_{(p)}^{n-1}$
and want to check whether for some $(x_2,\dots,x_n)\in X$ there exists $x_1 \in \mathbb{Z}_{(p)}$ satisfying~\eqref{eq:lin}. 
As long as $a_1\neq 0$, we can solve for $x_1$ and obtain
$$
 x_1 = a_1^{-1}\big( b - \sum_{j=2}^na_jx_j\big).
$$
However, computing
$v_p(x_1)$
can be difficult from just the values  $v_p(x_j)$ for $j=2,\dots,n$, since 
we are only guaranteed
$$
 v_p\Big(a_1^{-1}(b-\sum_{j=2}^na_jx_j)\Big) \geq \min\{ v_p(b)-v_p(a_1), 
 \min_{j=2,\dots,n}(v_p(a_j)-v_p(a_1)+v_p(x_j)) \}
$$
and it can happen that the inequality is strict. The right hand side is certainly nonnegative as long as
$v_p(a_1)\leq v_p(b)$
and $v_p(a_1)\leq v_p(a_j)$ for every $j$.
And in fact,
when $v_p(a_1)\leq v_p(a_j)$  for every $j$, the condition $v_p(a_1)\leq v_p(b)$ is also necessary
for the left hand side to be nonnegative:
If $v_p(a_1)>v_p(b)$, then
$v_p(b)-v_p(a_1)<0$ but $v_p(a_j)-v_p(a_1)+v_p(x_j)\geq0$ for every $j$, so the inequality is actually an equality.
Therefore, as long as $a_1$ has minimal valuation among the $a_i$, for any $(x_2,\dots,x_n)\in X$
there exists $x_1 \in \mathbb{Z}_{(p)}$ satisfying~\eqref{eq:lin} if and only if 
$v_p(a_1) \leq v_p(b)$.
This criterion easily generalizes to systems of several equations $Ax=b$ where $A$ 
is in row echelon form and 
each pivot element has minimal valuation in its row.
This is what Definition~\ref{def:fHNF} below expresses in the special case of the function $f(a,j)=v_p(a)$.
\end{example}

\begin{definition}\label{def:fHNF}
A \emph{pivot function} is a function 
$$f \colon \mathbb{Q} \times {\mathbb N}  \rightarrow\mathbb{Q}\cup\{\infty,-\infty\}$$
such that $f(a,j)=\infty$ 
%for every $j$.
if and only if $a=0$.
For a {pivot function} $f$,
we say that a matrix $A=(a_{ij})_{i,j} \in {\mathbb Q}^{m \times n}$ 
is in \emph{$f$-minimal row echelon form} if the following two conditions are satisfied. 
\begin{enumerate}[(a)]
    \item $A$ is in row echelon form, i.e., setting $j_i:=\inf\{j:a_{ij}\neq0\}$ {for $i \in \{1,\dots,n\}$,} 
    there exists $k \in \{0,\dots,m\}$ such that $j_1<\dots<j_k<j_{k+1}=\dots=j_m=\infty$. 
    \item Each pivot element $a_{i,j_i}$ of $A$ minimizes $f$ within its row in the sense that for each $i\in\{1,\dots,k\}$,
    $$
     f(a_{ij_i},j_i) = \min\{ f(a_{ij},j) : j=j_i,\dots,n\}.
    $$
\end{enumerate}
\end{definition}

\begin{example}
To explain why we need more general functions $f$ than 
just $f(a,j) = v_p(a)$, suppose we replace the conditions $v_p(x_j)\geq 0$ 
in Example~\ref{expl:lineq2} 
by $v_p(x_j)\geq c_j$ for some $c_j$.
Rewriting this as $v_p(x_jp^{-c_j})\geq 0$ we see that we could instead consider the matrix
$A'=(a'_{ij})_{i,j}$ given by $a_{ij}'=a_{ij}p^{c_j}$ and apply the criterion from Example~\ref{expl:lineq2}. However, the numbers $p^{c_j}$ have exponential representation size. This can be avoided by replacing the condition that
each pivot element $a_{i{j_i}}p^{c_{{j_i}}}$ of $A'$ minimizes the function $v_p$ within its row
by the condition that 
each pivot element
$a_{ij_i}$ of $A$ minimizes the function $f(a_{ij},j) := v_p(a_{ij}) +c_j$ within its row,
where the second argument indicates the column.
\end{example}

We write $\GL_m({\mathbb Q})$ for the general linear group of degree $m$ over the field ${\mathbb Q}$, i.e., the group
of all invertible matrices in ${\mathbb Q}^{m \times m}$.
If $\sigma\in S_n$ is a permutation, then $P_\sigma=(\delta_{i,\sigma(i)})_{i,j}\in\GL_n(\mathbb{Q})$
denotes the corresponding permutation matrix.
For a pivot function $f$ and
$\sigma \in S_n$, we write 
 $f_\sigma$ for the pivot function given by
$$f_\sigma(a,j) := \begin{cases}
    f(a,\sigma^{-1}(j)) & \text{ if } j \in \{1,\dots,n\} \\ f(a,j) & \text{ otherwise.} 
\end{cases}
$$

If $S$ is a set, then $S^*$ denotes the set of non-empty words over the alphabet $S$, i.e., 
the set of finite sequences of elements of $S$.

\begin{lemma}\label{lem:compp-echelon}
Let $f \colon \mathbb{Q}\times\mathbb{N}\times (\mathbb{Z} \cup \{-\infty\})^*\rightarrow\mathbb{Q}\cup\{\infty,-\infty\}$
and assume that for each
$c\in(\mathbb{Z} \cup \{-\infty\})^*$,
the map $f_c$ defined by $(a,j)\mapsto f(a,j,c)$ is a pivot function.
For 
every
$m,n\in\mathbb{N}$, $A\in\mathbb{Q}^{m\times n}$, and $c\in (\mathbb{Z} \cup \{-\infty\})^*$
there exist
$U\in{\rm GL}_m(\mathbb{Q})$ and 
$\sigma \in S_n$ 
%$P_\sigma=(\delta_{i,\sigma(i)})_{i,j}$
such that $UAP_{\sigma}$ is in
$(f_c)_\sigma$-minimal row echelon form. 
If $f$ is computable in polynomial time, then 
such
$U$ and $P_\sigma$ can be computed in polynomial time.
\end{lemma}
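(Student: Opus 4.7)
The plan is to run Gaussian elimination from top to bottom, modified so that in each row the pivot is chosen to minimize the given function $f(\cdot,\cdot,c)$ (with respect to the fixed word $c$), rather than by the usual left-to-right rule. I maintain a running matrix $B=UA$, the accumulated row-operation matrix $U$, and a list $\pi(1),\pi(2),\ldots$ of original column indices already designated as pivots; the permutation $\sigma$ is defined at the end by $\sigma^{-1}(i):=\pi(i)$ for $i\le k$, extended arbitrarily to a bijection of $\{1,\ldots,n\}$.

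At iteration $i$, let $\mathrm{avail}_i:=\{1,\ldots,n\}\setminus\{\pi(1),\ldots,\pi(i-1)\}$. If every entry $B_{i',j}$ with $i'\ge i$ and $j\in\mathrm{avail}_i$ is zero, terminate with $k:=i-1$. Otherwise, swap some row $i'\ge i$ containing a nonzero entry in $\mathrm{avail}_i$ into position $i$, select $\pi(i)\in\mathrm{avail}_i$ minimizing $f(B_{i,j},j,c)$ over $j\in\mathrm{avail}_i$, and then eliminate column $\pi(i)$ below row $i$ by standard row operations. The defining property of a pivot function -- that $f(a,j,c)=\infty$ iff $a=0$ -- ensures the minimum is attained at a nonzero entry, so $B_{i,\pi(i)}\neq 0$ and the elimination step is well-defined.

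Correctness is then immediate. Forward elimination preserves $B_{i',\pi(j)}=0$ for all $j<i\le i'$, so in $UAP_\sigma$, whose $(i,j)$-entry equals $B_{i,\sigma^{-1}(j)}$, the first nonzero entry of row $i$ for $i\le k$ sits at position $i$ itself, confirming condition (a) of Definition~\ref{def:fHNF}. For condition (b), the columns of $UAP_\sigma$ indexed by $j\in\{i,\ldots,n\}$ correspond precisely to original columns in $\mathrm{avail}_i$, so the minimality of $f(B_{i,\pi(i)},\pi(i),c)$ among $j\in\mathrm{avail}_i$ is exactly what the definition requires.

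The only substantive issue will be coefficient blowup during the row operations. This is handled by the classical polynomial bit-complexity of Gaussian elimination over $\mathbb{Q}$; see \cite[Corollary 3.3a and Theorem 3.3]{Schrijver}. Since our column-pivoting rule only reorders columns and does not itself introduce new arithmetic, these bounds transfer without change. Otherwise the complexity analysis is routine: at most $\min(m,n)$ iterations, each performing $O(n)$ evaluations of the (by hypothesis polynomial-time computable) function $f$ together with $O(mn)$ rational arithmetic operations.
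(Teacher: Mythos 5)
Your proposal is correct and follows essentially the same approach as the paper's proof: Gaussian elimination with the pivot in each row chosen to minimize $f_c$, column bookkeeping via a permutation (the paper swaps columns in place, you defer the permutation to the end, which is equivalent), and an appeal to the classical polynomial bit-size bounds for Gaussian elimination from Schrijver to control coefficient growth.
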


\begin{proof}
    We describe how to get $U$ and $P_{{\sigma}}$ in terms of elementary row and column operations, where the only elementary column operations allowed are swapping two columns. 
If $A=0$, then we are done.
Otherwise, possibly swap two rows to assume that $a_{1j}\neq 0$ for some $j$. Choose $k \in \{1,\dots,n\}$ such that
$$
 f_c(a_{1k},k)=\min\{ f_c(a_{1j},j) : j=1,\dots,n \}
$$
(which implies in particular that
$a_{1k}\neq 0$, since $f_c(0,k)=\infty$ by assumption).
If $k\neq 1$, then swap the first column with the $k$-th column.
Add multiples of the first row to the other rows to achieve that
$a_{i1}=0$ for every $i>1$. Reduce the fractions in the entries of the matrix. 
Now take the $(m-1)\times(n-1)$-submatrix with rows $i=2,\dots,m$ and columns $j=2,\dots,n$, and iterate
(extending each of the following row and column operations to the whole matrix). 
It is well-known that the representation size of the involved numbers stays polynomial (see, e.g., \cite[Theorem 3.3]{Schrijver}).
This process terminates after at most $\max\{m,n\}$ steps, and the resulting matrix is of the desired form. 
\end{proof}

In the following, if $x \in {\mathbb Q}^n$ and $c \in ({\mathbb Z} \cup \{-\infty\})^n$, we will write $v_p(x) \geq c$ if $v_p(x_j) \geq c_j$ for every $j \in \{1,\dots,n\}$. 

\begin{remark}\label{rem:transform}
Note
that if $B = U A P_\sigma$ for some $U \in {\rm GL}_m(\mathbb{Q})$ and $\sigma \in S_n$, then 
$Ax=b$ has a solution $x \in {\mathbb Q}^n$ such that 
$v_p(x) \geq c$ if and only if
$B y = U b$ has a solution $y \in {\mathbb Q}^n$ such that $v_p(y) \geq P_\sigma^{-1}c$ (the map $x \mapsto P^{-1}_\sigma x$ is a bijection between the solutions to the first system 
and the solutions to the second system).
\end{remark}

The following result allows constraints of the form $v_p(x)\geq c$ and, in the case $p=2$, constraints of the form $v_2(x)=c$. The tuple $\delta$ encodes which constraint applies to which variable. 

\begin{theorem}\label{thm:alg-geq}
There is a polynomial-time algorithm that decides, given
$m,n\in\mathbb{N}$, $p\in\mathbb{P}$, $c\in(\mathbb{Z} \cup \{-\infty\})^n$,
$\delta\in\{0,1\}^n$,
$A\in {\mathbb Q}^{m\times n}$, 
and $b\in\mathbb{Q}^m$,
whether there exists $x\in\mathbb{Q}^n$ with $Ax=b$
such that $v_p(x)\geq c$ and, in the case $p=2$, $\delta_j=1$,
and $c_j\neq-\infty$, also
$v_p(x_j)=c_j$.
\end{theorem}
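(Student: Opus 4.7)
The plan is to reduce the problem to a system with only lower-bound valuation constraints and then solve that via Lemma~\ref{lem:compp-echelon}, combined with a pivot-by-pivot feasibility test.

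Equality constraints arise only when $p=2$, and here I exploit that $\{1,\dots,p-1\}=\{1\}$: by Lemma~\ref{lem:ac}, $v_2(x_j)=c_j$ holds if and only if there exists $z_j$ with $v_2(z_j)\geq 0$ and $x_j = 2^{c_j} + 2^{c_j+1} z_j$. (The analogous substitution for $p\geq 3$ would require a disjunction over the $p-1$ unit classes, which is presumably why the problem becomes hard in those cases.) Substituting into $Ax=b$ multiplies the coefficients in the equality columns by $2^{c_j+1}$ and shifts $b$ by $-\sum_j a_{ij}2^{c_j}$. Since the $c_j$ are given in binary, the factors $2^{c_j}$ can be doubly exponentially large; I therefore represent them lazily as pairs (coefficient, $p$-power), relying on Lemma~\ref{lem:computeval} to compute the $p$-adic valuation of such sums in polynomial time. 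Because column scaling commutes with row operations, I do not form the rescaled matrix explicitly: I work with the original $A$ and carry along a vector of column weights $w_j$, where $w_j:=c_j$ for non-equality columns and $w_j:=c_j+1$ for equality columns (with $w_j:=-\infty$ when the constraint is trivial).

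I then apply Lemma~\ref{lem:compp-echelon} with pivot function $f(a,j,w):=v_p(a)+w_j$ (and $f=\infty$ if $a=0$) to compute, in polynomial time, $U\in\GL_m(\mathbb{Q})$ and $\sigma\in S_n$ such that $B:=UAP_\sigma$ is in $(f_w)_\sigma$-minimal row echelon form. Let the pivots lie in rows $1,\dots,k$ at columns $j_1<\dots<j_k$, and set $w'_j:=w_{\sigma^{-1}(j)}$. The algorithm outputs YES if and only if (a) the $m-k$ zero rows of $B$ have corresponding zero entries in the substituted and $U$-rotated right-hand side $\tilde b$, and (b) each pivot row satisfies $v_p(\tilde b_i)\geq v_p(B_{i,j_i})+w'_{j_i}$. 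Both conditions are checkable in polynomial time, since every $\tilde b_i$ is a polynomial-sized rational combination of terms of the form $q\cdot 2^e$, to which Lemma~\ref{lem:computeval} applies.

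Necessity of (a) and (b) follows directly from Lemma~\ref{lem:vpq}: any solution $y$ must satisfy $v_p(\tilde b_i)\geq\min_j(v_p(B_{ij})+v_p(y_j))$, which by the lower bounds on $v_p(y_j)$ and by pivot-minimality is at least $v_p(B_{i,j_i})+w'_{j_i}$. Sufficiency is the main nontrivial step: assuming (a) and (b), I construct a solution by setting every non-pivot variable $y_j$ to $0$ and back-substituting from $i=k$ down to $i=1$ via $y_{j_i}:=B_{i,j_i}^{-1}\bigl(\tilde b_i-\sum_{l>i}B_{i,j_l}y_{j_l}\bigr)$. A straightforward induction on $i$, using Lemma~\ref{lem:vpq}, pivot-minimality, and criterion (b), shows $v_p(y_{j_i})\geq w'_{j_i}$. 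Translating back via the inverse permutation and the substitution $x_j=2^{c_j}+2^{c_j+1}z_j$ then yields a solution of the original problem. The main technical obstacle is the bookkeeping: ensuring that all intermediate quantities, most notably the components of $\tilde b$ and the valuations appearing in the feasibility test, remain polynomial-sized and computable in polynomial time despite the doubly exponential factors $2^{c_j}$; this is exactly where the lazy representation combined with Lemma~\ref{lem:computeval} is indispensable.
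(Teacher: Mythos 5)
Your proposal is correct, and it reaches essentially the same algorithm as the paper (compute an $f$-minimal row echelon form via Lemma~\ref{lem:compp-echelon}, then test the zero rows and a per-pivot valuation inequality evaluated lazily via Lemma~\ref{lem:computeval}; your conditions (a) and (b) coincide with the paper's conditions \eqref{cond1} and \eqref{cond2}). The genuine difference lies in how the equality constraints $v_2(x_j)=c_j$ are handled. You eliminate them up front by the affine substitution $x_j=2^{c_j}+2^{c_j+1}z_j$ (this is exactly Lemma~\ref{lem:eq0}), which converts everything into a pure lower-bound problem with column weights $w_j=c_j+\delta_j$; consequently your correctness proof only ever needs one pivot-minimality condition, namely minimality of $v_p(a_{ij})+w_j$, and you set all non-pivot variables to $0$. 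The paper instead keeps the original variables, uses the half-integer pivot function $v_p(a)+c_j+\delta_j/2$ precisely so that the single minimization yields \emph{both} auxiliary conditions $(b')$ and $(b'')$ (minimality of $v_p(a_{ij})+c_j$ and of $v_p(a_{ij})+c_j+\delta_j$), sets non-pivot variables to $p^{c_j}$, and tracks $v_p(x_j-\delta_j p^{c_j})$ throughout. Your two pivot functions can select different pivots in case of ties, but each yields a correct algorithm. What your route buys is a cleaner separation of concerns (the $p=2$-specific fact is isolated in one substitution, and the remaining argument is the uniform lower-bound case); what it costs is that the substituted matrix $AD$ has doubly exponential entries, so you must --- as you correctly do --- never materialize it and instead carry the column weights and the lazily represented right-hand side $\tilde b_i=(Ub)_i-\sum_j(UA)_{ij}\delta_j2^{c_j}$ through the elimination, which is the same bookkeeping device the paper uses in condition \eqref{cond2}.
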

\begin{proof}
We can assume that if $\delta_j=1$ for some $j$, then $p=2$ and $c_j\neq-\infty$.
Define the pivot function 
$$
f(a,j):=v_p(a)+c_j+\frac{\delta_j}{2},
$$ 
where we use the convention $\infty+(-\infty):=\infty$. 
%and 
%where we set $f(0,j):=\infty$ for every $j$ (even if $c_j=-\infty)$. 
Clearly, $f$ is computable in polynomial time (as a function of $a$, $j$, $p$ and $c$ and $\delta$).  
By Lemma \ref{lem:compp-echelon} we can 
compute $U$ and $P_\sigma$ in polynomial time such that
$UAP_\sigma$ is in 
$f_\sigma$-minimal
row echelon form. 
We may 
replace $A$ by $UAP_{\sigma}$, 
$b$ by $Ub$,
$c$ by $P_\sigma^{-1}c$, 
and $\delta$ by $P_\sigma^{-1}\delta$ (adapting the idea from Remark~\ref{rem:transform} appropriately in the case $p=2$),
and henceforth 
assume without loss of generality that
$\sigma={\rm id}$ and that 
$A$ is already in $f$-minimal row echelon form.

Let $k$ be as in Definition~\ref{def:fHNF}. 
    Note that condition (b) in Definition~\ref{def:fHNF} states that for every $i\leq k$ 
\begin{align}
v_p(a_{ij_i})+c_{j_i}+\frac{\delta_{j_i}}{2} =\min\left\{v_p(a_{ij})+c_j+\frac{\delta_j}{2} : j=j_i,\dots,n\right\}. \label{eq:nf}
    \end{align}
Since for every $i \in \{1,\dots,k\}$ and $j \in \{1,\dots,n\}$ we have $v_p(a_{ij}) \in {\mathbb Z} \cup\{\infty\}$,  $c_j\in\mathbb{Z} \cup\{-\infty\}$, 
 and $\delta_j\in\{0,1\}$,
\eqref{eq:nf} implies that 
\begin{enumerate}
    \item[$(b')$] \label{cond:b'}   $v_p(a_{ij_i})+c_{j_i}=\min\left\{v_p(a_{ij})+c_j:j=j_i,\dots,n\right\}
     $, and
    \item[$(b'')$] \label{cond:b''}$v_p(a_{ij_i})+c_{j_i}+{\delta_{j_i}}=\min\left\{v_p(a_{ij})+c_j+{\delta_j}:j=j_i,\dots,n\right\}$.
\end{enumerate}
The algorithm then outputs YES
if 
\begin{enumerate}
    \item\label{cond1} $b_i=0$ for every $i \in \{k+1,\dots,m\}$, and
    \item\label{cond2} $v_p(a_{ij_i})+c_{j_i}+\delta_{j_i}\leq v_p(b_i-\sum_{j\geq j_i}\delta_j a_{ij}p^{c_j})$ for every $i\in\{1,\dots,k\}$,
\end{enumerate}
and otherwise it outputs NO.
Note that Condition \eqref{cond2} can be checked in polynomial time by Lemma~\ref{lem:computeval}.

To see that this is the correct answer, we have to show that 
\eqref{cond1} and \eqref{cond2} holds if and only if there exists $x \in {\mathbb Q}^n$ with $Ax = b$, $v_p(x) \geq c$ and $v_p(x_j) = c_j$ for every $j$ with $\delta_j=1$.

To prove the backwards direction, assume such an $x \in {\mathbb Q}^n$ exists. Then clearly \eqref{cond1} holds, and we will argue that \eqref{cond2} must be satisfied as well. 
Suppose for contradiction that
\begin{equation}\label{eqn:nicht-2}
v_p(a_{ij_i})+c_{j_i}+\delta_{j_i}> v_p\Big(b_i-\sum_{j\geq j_i}\delta_ja_{ij}p^{c_j}\Big)    
\end{equation}
for some $i\leq k$.
Since $Ax=b$ and $A$ is in row echelon form, we have that 
\begin{align}
    x_{j_i} = a_{ij_i}^{-1} \cdot \Big ( b_i - \sum_{j > j_i} a_{ij} x_j \Big),
    \label{eq:lineq}
\end{align}
and this implies by Lemma \ref{lem:vpq} that
\begin{eqnarray}\label{eqn:new}
v_p(x_{j_i}-\delta_{j_i}p^{c_{j_i}})
&{=}& v_p\Big(a_{ij_i}^{-1} \big(b_i - \delta_{j_i}a_{ij_i}p^{c_{j_i}} 
-
\sum_{j > j_i} a_{ij} x_j 
\big) \Big)
\nonumber 
\\
&=& 
v_p\Big(
b_i -\sum_{j\geq j_i}\delta_ja_{ij}p^{c_j}
-
\sum_{j > j_i} a_{ij} (x_j- 
\delta_jp^{c_j}) \Big)
- v_p(a_{ij_i})
\nonumber 
\\
&\geq &\min\Big\{v_p\Big(b_i-\sum_{j\geq j_i}\delta_ja_{ij}p^{c_j}\Big),\min_{j>j_i}\big(v_p(a_{ij})+v_p(x_j-\delta_jp^{c_j})\big)\Big\}-v_p(a_{ij_i}).   \label{eq:nmins1}
\end{eqnarray}
Note that 
\begin{equation}\label{eqn:minusdelta}
 v_p(x_j-\delta_jp^{c_j})\geq c_j+\delta_j 
\end{equation}
for every $j$, because if $\delta_j=1$ we are in the case $p=2$ 
and have $v_p(x_j)=c_j=v_p(p^{c_j})$, 
and thus $v_p(x_j - \delta_j p^{c_j}) > c_j$ (Lemma \ref{lem:ac}); for $\delta_j=0$ the statement $v_p(x_j)\geq c_j$ holds by assumption. 
We get for every $j\geq j_i$ that
\begin{align}
 v_p\Big(b_i-\sum_{j\geq j_i}\delta_ja_{ij}p^{c_j}\Big) 
 & \stackrel{(\ref{eqn:nicht-2})}{<}
 v_p(a_{ij_i})+c_{j_i}+\delta_{j_i} 
  \nonumber 
  \\
 & \stackrel{(b'')}{\leq} 
v_p(a_{ij})+c_{j}+\delta_j \stackrel{(\ref{eqn:minusdelta})}{\leq}
v_p(a_{ij})+v_p(x_j-\delta_jp^{c_j}). 
\label{eq:ineq}
\end{align}
Therefore, by Lemma \ref{lem:vpq} the inequality in (\ref{eqn:new}) is an equality and
$$
 v_p(x_{j_i}-\delta_{j_i}p^{c_{j_i}}) 
 = v_p\Big(b_i-\sum_{j\geq j_i}\delta_j a_{ij}p^{c_j}\Big)- v_p(a_{ij_i})\\
 \stackrel{(\ref{eqn:nicht-2})}{<} c_{j_i}+\delta_{j_i},
$$
which is a contradiction to (\ref{eqn:minusdelta}) for $j=j_i$.

For the forward direction, we 
assume that \eqref{cond1} and \eqref{cond2} hold and construct $x$ as follows:
for each $j \in \{1,\dots,n\} \setminus \{j_1,\dots,j_k\}$, 
let $x_j := p^{c_j}$
if $c_j\in\mathbb{Z}$, and otherwise let $x_j:=0$. For 
$i = k,\dots,1$ define
$x_{j_i}$ iteratively by
\eqref{eq:lineq},
which implies that
\eqref{eqn:new} again holds.
The so constructed $x$ satisfies
$Ax=b$, and for each
$j\notin\{j_1,\dots,j_k\}$
that $v_p(x_j) \geq c_j$
and $v_p(x_j)=c_j$ if $\delta_j=1$
We prove by induction on 
$i =k,\dots,1$ 
that $v_p(x_{j_i}) \geq c_{j_i}$
and that $v_p(x_{j_i}) = c_{j_i}$ 
if $\delta_{j_i}=1$. 

We first consider the case $\delta_{j_i}=0$. 
Then $v_p(b_i-\sum_{j\geq j_i}\delta_ja_{ij}p^{c_j})-v_p(a_{ij_i})\geq c_{j_i}$ by \eqref{cond2}. 
Moreover, $v_p(x_j)\geq c_j$ for each $j>j_i$ by the inductive assumption,
and since also $v_p(\delta_jp^{c_j})\geq c_j$, it follows that $v_p(x_j-\delta_jp^{c_j})\geq c_j$.
Thus 
\begin{align*}
    v_p(x_{j_i}) & 
    \geq {\min\big\{c_{j_i},\min_{j > j_i} (v_p(a_{ij})  + c_j) - v_p(a_{ij_i})\big\}} && \text{(by~\eqref{eq:nmins1} and the above)} \\
    %& \geq \red{\min \{ v_p(a_{ij}) + c_j + \delta_j : j = j_1,\dots,n \} } - v_p(a_{ij_i}) && \text{(by the above)} \\
    & = c_{j_i} && \text{(by $(b')$)}
\end{align*} 
as claimed.

In the case
$\delta_{j_i}=1$ we necessarily have $p=2$,
and now \eqref{cond2} gives that
$$
 v_p\Big(b_i-\sum_{j\geq j_i}\delta_ja_{ij}p^{c_j}\Big)-v_p(a_{ij_i})\geq c_{j_i}+\delta_{j_i}>c_{j_i}.
$$ 
Let $j>j_i$. We have 
$v_p(x_j)\geq c_j$ by the inductive assumption. 
If $\delta_j=0$, then $(b'')$ gives that {$v_p(a_{ij_i})+c_{j_i} +1 \leq v_p(a_{ij})+c_j$
and hence $v_p(a_{ij})+ v_p(x_j) - v_p(a_{ij_i}) > c_{j_i}$.} 
If $\delta_j=1$, then $v_p(x_j) = c_j$, which since $p=2$ implies that
$v_p(x_j-p^{c_j})>c_j$ (Lemma \ref{lem:ac}).
Now, $(b'')$ gives $v_p(a_{ij_i}) + c_{j_i} \leq v_p(a_{ij}) + c_j$, and hence $v_p(a_{ij}) + v_p(x_j-p^{c_j}) - v_p(a_{i j_i}) > c_{i_j}$.
Then $v_p(x_{j_i} - p^{c_{j_i}}) > c_{j_i}$ by~\eqref{eq:nmins1} and the above. 
Finally, 
this implies
$v_p(x_{j_i})=c_{j_i}$
since $p=2$ (Lemma \ref{lem:vpq}).
\end{proof}

\section{NP-hardness and reductions}
\label{sec:hard}

\noindent
For a set $A$ and $a \in A$, we use 
$\neq_a$ as a relation symbol for the unary relation $A \setminus \{a\}$, 
and later write $x \neq a$ instead of ${\neq_a}(x)$.

\begin{lemma}\label{lem:cyclic}
Let $G$ be a finite cyclic group of order $n\geq 3$. Then $\CSP(G;+,\neq_0)$
is NP-hard. In particular, the primitive existential theory of $(G;+)$ is NP-hard. 
\end{lemma}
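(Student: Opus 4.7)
The plan is to give a polynomial-time reduction from graph $n$-coloring to $\CSP(G;+,\neq_0)$, where $n = |G| \geq 3$. Graph $n$-coloring is NP-hard for every fixed $n \geq 3$: the case $n = 3$ is classical (Karp), and for $n > 3$ one reduces from $3$-coloring by attaching to the input graph a clique $K_{n-3}$ fully joined to every original vertex, which forces these extra vertices to occupy $n-3$ distinct colors and leaves exactly $3$ colors for the rest.

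First I would observe that the binary disequality relation on $G$ is primitively positively definable in $(G;+,\neq_0)$: since $G$ is a group, $x \neq y$ holds if and only if there exists $z$ with $x = y + z$ and $z \neq 0$ (namely $z = x - y$). Note that the cyclic assumption is not needed here; any finite group of order at least $3$ would work in the argument below.

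Given an instance $\Gamma = (V,E)$ of $n$-coloring, I would construct an instance of $\CSP(G;+,\neq_0)$ by introducing a variable $x_v$ for every $v \in V$, and, for every edge $\{u,v\} \in E$, a fresh variable $z_{uv}$ together with the atoms $x_u = x_v + z_{uv}$ and $\neq_0(z_{uv})$. A satisfying assignment then provides elements $x_v \in G$ with $x_u \neq x_v$ for every edge $\{u,v\}$, i.e.\ a proper $n$-coloring of $\Gamma$ using the $n$ elements of $G$ as colors; conversely, any proper $n$-coloring extends to a solution by taking $z_{uv} := x_u - x_v$. Since the construction is computable in polynomial time, NP-hardness of $n$-coloring transfers to $\CSP(G;+,\neq_0)$.

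For the ``in particular'' statement, I would note that $0$ is the unique element of $G$ satisfying $y + y = y$, so $\neq_0(x)$ is expressible by the primitive existential formula $\exists y\,(y + y = y \wedge \neg(x = y))$ in $(G;+)$. Hence every instance of $\CSP(G;+,\neq_0)$ translates in polynomial time into a primitive existential sentence over $(G;+)$, and the second claim follows from the first. No genuine obstacle is anticipated; the only point requiring care is to remember that $n$-coloring is NP-hard for \emph{all} $n \geq 3$, not only for $n = 3$.
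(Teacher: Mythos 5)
Your proof is correct and follows essentially the same route as the paper: a primitive positive definition of binary disequality from $\neq_0$, followed by a reduction from graph $n$-colouring, which is NP-hard for every fixed $n\geq 3$. Your formula $\exists z\,(x=y+z\wedge {\neq_0}(z))$ is a slightly more direct variant of the paper's $\exists e,z\,(e+e=e\wedge y+z=e\wedge x+z\neq 0)$, and you additionally spell out the ``in particular'' clause (expressing $\neq_0$ by a negated atom over $(G;+)$), which the paper leaves implicit.
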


\begin{proof}
The primitive positive formula
$\exists e,z (e+e = e \wedge y+z = e \wedge x+z \neq 0)$ defines the binary relation $\neq$ over  $G$. 
A finite graph with vertices $[n]$ and edges $E\subseteq [n]^2$
can be colored with $n=|G|$ colors if and only if
$\bigwedge_{(i,j)\in E}x_i\neq x_j$ is satisfiable in $G$.
For $n\geq 3$, the graph coloring problem is NP-hard \cite[{Section 4}]{GareyJohnson},
so the claim follows from Lemma~\ref{lem:pp-int}.
\end{proof}

\begin{lemma}\label{lem:interpret_ZpeZ}
For every prime number $p$ and every $e\in\mathbb{N}$ 
the structure $(\mathbb{Z}/p^e\mathbb{Z};+,\neq_0)$
has a primitive positive interpretation
in $(\mathbb{Z}_p;+,<^p_e)$.
\end{lemma}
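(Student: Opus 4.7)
I would give a one-dimensional primitive positive interpretation. Let $I\colon\mathbb{Z}_p\to\mathbb{Z}/p^e\mathbb{Z}$ be the natural surjection arising from the isomorphism $\mathbb{Z}_p/p^e\mathbb{Z}_p\cong\mathbb{Z}/p^e\mathbb{Z}$ recorded after Lemma~\ref{lem:vpq}. Since $I$ is defined on all of $\mathbb{Z}_p$, the preimage of the domain $\mathbb{Z}/p^e\mathbb{Z}$ is $\mathbb{Z}_p$ itself, trivially defined by $x=x$. The preimage of $\neq_0$ (that is, $\mathbb{Z}/p^e\mathbb{Z}\setminus\{0\}$) is exactly $\{x\in\mathbb{Z}_p:v_p(x)<e\}$, which is the atomic relation $<^p_e$. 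What remains is to primitive-positively define the equivalence relation $I(x)=I(y)$ and the graph of addition modulo $p^e$, and both reduce to the same task: expressing
$$
 p^e\mathbb{Z}_p = \{z\in\mathbb{Z}_p : v_p(z)\geq e\}
$$
by a primitive positive formula in $(\mathbb{Z}_p;+)$.

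For this, I would simply observe that the additive group $\mathbb{Z}_p$ is torsion-free, so multiplication by the integer $p^e$ is a well-defined map which coincides with iterated addition, and its image is exactly the ideal $p^e\mathbb{Z}_p$. Hence
$$
 v_p(z)\geq e \quad\Longleftrightarrow\quad \exists w\;\bigl(z=\underbrace{w+w+\cdots+w}_{p^e\text{ summands}}\bigr),
$$
which is a primitive positive formula in $(\mathbb{Z}_p;+)$ (its size depends on $e$ and $p$, but for a primitive positive interpretation only its existence matters). From this, $I(x)=I(y)$ is defined by $\exists w\,(x=y+w+\cdots+w)$ with $p^e$ copies of $w$, and the graph $I(x)+I(y)=I(z)$ by $\exists w\,(x+y=z+w+\cdots+w)$.

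There is no real obstacle here; the only thing worth double-checking is that the identity $\{p^e w:w\in\mathbb{Z}_p\}=p^e\mathbb{Z}_p$ is used correctly, and that the formula is genuinely primitive positive (no negation or disjunction appears). With this, all four items required by the definition of a primitive positive interpretation are verified, and the interpretation is complete.
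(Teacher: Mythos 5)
Your proof is correct and follows essentially the same route as the paper: both use the quotient map $\mathbb{Z}_p\to\mathbb{Z}_p/p^e\mathbb{Z}_p\cong\mathbb{Z}/p^e\mathbb{Z}$, define $p^e\mathbb{Z}_p$ primitively positively in $(\mathbb{Z}_p;+)$ via iterated addition (which handles equality and the graph of $+$), and identify the preimage of $\neq_0$ with the atomic relation $<^p_e$. Your write-up is in fact slightly more explicit than the paper's about the preimages of the domain and of equality; the aside about torsion-freeness is unnecessary but harmless.
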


\begin{proof}
The quotient map $\gamma \colon \mathbb{Z}_p\rightarrow\mathbb{Z}_p/p^e\mathbb{Z}_p\cong\mathbb{Z}/p^e\mathbb{Z}$ does the job:
As $\gamma^{-1}(0)=p^e\mathbb{Z}_p$ is primitively positively definable in $(\mathbb{Z}_p;+)$, 
also the pullback of the graph of $+$ is primitively positively definable in
$(\mathbb{Z}_p;+)$.
%Without loss of generality we can replace the binary $\neq$ by a unary $\neq0$. 
Finally, $\gamma^{-1}(\neq_0)=\mathbb{Z}_p\setminus p^e\mathbb{Z}_p=\{x\in\mathbb{Z}_p:v_p(x)<e\}$ is primitively positively definable in
$(\mathbb{Z}_p;+,<^p_e)$.
\end{proof}

\begin{proposition}\label{prop:eq0}
The primitive positive theory of
$\CSP(\mathbb{Z}_p;+,=^p_0)$ is NP-hard for $p\geq 3$, and 
%the primitive positive theory of
$\CSP(\mathbb{Z}_p;+,\leq^p_1)$
is NP-hard for all prime numbers $p$.  
\end{proposition}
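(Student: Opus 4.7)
The plan is to chain Lemmas~\ref{lem:cyclic}, \ref{lem:interpret_ZpeZ}, and \ref{lem:pp-int}: the coloring-based NP-hardness of $\CSP(G;+,\neq_0)$ for cyclic groups $G$ of order $\geq 3$ transfers by primitive positive interpretation to the $p$-adic structures in question. The only thing I need to check is that each of $=^p_0$ and $\leq^p_1$ coincides, when restricted to $\mathbb{Z}_p$, with one of the relations $<^p_e$ covered by Lemma~\ref{lem:interpret_ZpeZ}, and that the resulting residue group has order at least $3$.

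For the first assertion, since every element of $\mathbb{Z}_p$ has non-negative valuation, $\{x\in\mathbb{Z}_p:v_p(x)=0\}=\{x\in\mathbb{Z}_p:v_p(x)<1\}$, so the structure $(\mathbb{Z}_p;+,=^p_0)$ coincides literally with $(\mathbb{Z}_p;+,<^p_1)$. Applying Lemma~\ref{lem:interpret_ZpeZ} with $e=1$ therefore yields a primitive positive interpretation of $(\mathbb{Z}/p\mathbb{Z};+,\neq_0)$ in $(\mathbb{Z}_p;+,=^p_0)$. For $p\geq 3$, the cyclic group $\mathbb{Z}/p\mathbb{Z}$ has order $\geq 3$, so Lemma~\ref{lem:cyclic} gives NP-hardness of its CSP, and Lemma~\ref{lem:pp-int} transfers this to $\CSP(\mathbb{Z}_p;+,=^p_0)$.

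For the second assertion, I use instead the identity $\{x\in\mathbb{Z}_p:v_p(x)\leq 1\}=\{x\in\mathbb{Z}_p:v_p(x)<2\}$, so $(\mathbb{Z}_p;+,\leq^p_1)=(\mathbb{Z}_p;+,<^p_2)$. Now Lemma~\ref{lem:interpret_ZpeZ} with $e=2$ supplies a primitive positive interpretation of $(\mathbb{Z}/p^2\mathbb{Z};+,\neq_0)$ in $(\mathbb{Z}_p;+,\leq^p_1)$. For every prime $p$, including $p=2$, one has $p^2\geq 4\geq 3$, so the cyclic group $\mathbb{Z}/p^2\mathbb{Z}$ again meets the hypothesis of Lemma~\ref{lem:cyclic}, and Lemma~\ref{lem:pp-int} concludes that $\CSP(\mathbb{Z}_p;+,\leq^p_1)$ is NP-hard.

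There is no genuine obstacle: all of the technical content is already packaged in the three cited lemmas, and the argument reduces to the bookkeeping choice of the exponent $e$ ensuring $p^e\geq 3$. The only mild subtlety is that for $p=2$ one cannot take $e=1$ (the group $\mathbb{Z}/2\mathbb{Z}$ is too small), which is precisely why the exponent is bumped to $e=2$ in the second claim and why the relation $\leq^p_1$ rather than $=^p_0$ suffices to obtain hardness for all primes.
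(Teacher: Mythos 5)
Your proof is correct and follows exactly the paper's own argument: identify $=^p_0$ with $<^p_1$ and $\leq^p_1$ with $<^p_2$ on $\mathbb{Z}_p$, apply Lemma~\ref{lem:interpret_ZpeZ} with $e=1$ resp.\ $e=2$, and conclude via Lemmas~\ref{lem:cyclic} and~\ref{lem:pp-int}, with the same observation that $p^2\geq 3$ handles $p=2$ in the second claim.
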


\begin{proof}
If $p \geq 3$, 
then $(\mathbb{Z}/p \mathbb{Z};+,\neq_0)$ is NP-hard by Lemma~\ref{lem:cyclic}. 
Moreover, by Lemma~\ref{lem:interpret_ZpeZ} it has a 
primitive positive interpretation in 
$({\mathbb Z}_p;+,=_0^p) = ({\mathbb Z}_p;+,<_1^p)$ 
and so
$\CSP(\mathbb{Z}_p;+,=^p_0)$ is NP-hard
by Lemma~\ref{lem:pp-int}.

If $p$ is an arbitrary prime number, then  $(\mathbb{Z}/p^2\mathbb{Z};+)$ 
is cyclic of order $p^2\geq 3$ and we have that $\CSP(\mathbb{Z}/p^2 \mathbb{Z};+,\neq_0)$ is NP-hard by Lemma~\ref{lem:cyclic}. 
The structure $(\mathbb{Z}/p^2 \mathbb{Z};+,\neq_0)$
has a primitive positive interpretation in 
$(\mathbb{Z}_p;+,<^p_2)$ by 
Lemma~\ref{lem:interpret_ZpeZ}, and hence 
$(\mathbb{Z}_p;+,<^p_2)= (\mathbb{Z}_p;+,\leq^p_1)$ is NP-hard by Lemma~\ref{lem:pp-int}.
\end{proof}

Let $c$ be a positive integer. In primitive positive formulas over structures whose signature contains $+$ and $1$, we use $cy$ as a shortcut for $\underbrace{y+\cdots+y}_{c \text{ times}}$,
and $c$ as a shortcut for $c1$. 
We also freely use the term $x+c$ for $c \in {\mathbb Z}$; if $c = 0$, then this can be replaced by $x$, and if $c < 0$, then 
this can be rewritten into a proper primitive positive formula by introducing a new existentially quantified variable $y$, replacing $x+c$ by $y$, and 
adding a new conjunct $x = y + |c|$.

\begin{lemma}\label{lem:geq0}
For $p\geq 3$, the primitive positive formula 
$$
\exists y,z \big (v_p(y)=0\wedge v_p(z)=0\wedge x=y+z \big )
$$
defines the relation $\geq^p_0$ 
in $({\mathbb Q}_p;+,=^p_0)$. 
The primitive positive formula 
$$
\exists y,z \big (v_2(y)=0\wedge v_2(z)=0\wedge 2x=y+z\big )
$$ 
defines the relation $\geq^2_0$
in $({\mathbb Q}_2;+,=^2_0)$.
\end{lemma}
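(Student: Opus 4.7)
The plan is to verify each direction of both claimed equivalences. In both cases the forward implication uses only Lemma \ref{lem:vpq}, while the backward implication requires explicitly constructing $y$ and $z$; the case analysis needed in the latter is what distinguishes the two formulas.

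For $p \geq 3$, the forward direction is immediate: any $x = y + z$ with $v_p(y) = v_p(z) = 0$ satisfies $v_p(x) \geq \min(0,0) = 0$. For the backward direction, suppose $v_p(x) \geq 0$. If $v_p(x) > 0$, I would set $y := 1$ and $z := x - 1$; since $v_p(x) > 0 = v_p(-1)$, the equality clause of Lemma \ref{lem:vpq} gives $v_p(z) = 0$. If $v_p(x) = 0$, Lemma \ref{lem:ac} yields a unique $\bar{x} \in \{1,\dots,p-1\}$ with $v_p(x - \bar{x}) > 0$; I would then choose $y \in \{1,2\}$ with $y \not\equiv \bar{x} \pmod{p}$ (possible because $p \geq 3$), and let $z := x - y$. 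Then $v_p(y) = 0$ and, since the residue of $z$ is $\bar{x} - y \not\equiv 0 \pmod p$, also $v_p(z) = 0$.

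The case $p = 2$ genuinely needs the factor of $2$ in $2x = y+z$, because $v_2(y) = v_2(z) = 0$ forces $y,z$ to be odd $2$-adic units, so $y+z \in 2\mathbb{Z}_2$, making any $x \in \mathbb{Z}_2^\times$ unreachable as $y+z$. With the factor present, the forward direction becomes $v_2(y+z) \geq 1$, so $v_2(2x) \geq 1$ and hence $v_2(x) \geq 0$. For the backward direction, given $v_2(x) \geq 0$, I would set $y := 1$ and $z := 2x - 1$; since $v_2(2x) \geq 1 > 0 = v_2(-1)$, Lemma \ref{lem:vpq} gives $v_2(z) = 0$.

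The one subtle point I expect is the $p \geq 3$ backward direction when $v_p(x) = 0$: the naive choice $y = 1$ fails precisely when $x \equiv 1 \pmod p$, and one needs a second residue class to fall back on. The hypothesis $p \geq 3$ supplies $y = 2$ as a unit whose residue differs from $1$, and its unavailability for $p = 2$ is exactly the reason the second formula compensates by multiplying $x$ by $2$.
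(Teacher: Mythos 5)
Your proof is correct and follows essentially the same route as the paper: the forward directions are the ultrametric inequality, and the backward directions use Lemma~\ref{lem:ac} to pick a unit residue $i$ avoiding the residue of $x$ (the paper handles your two cases $v_p(x)>0$ and $v_p(x)=0$ uniformly by allowing $i_0\in\{0,\dots,p-1\}$, and for $p=2$ uses the identical decomposition $2x=(2x-1)+1$).
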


\begin{proof}
First let $p\geq 3$.
Suppose that $x \in {\mathbb Q}_p$ is such that $v_p(x)\geq 0$. 
Let $i_0\in\{0,\dots,p-1\}$ be 
such that 
$v_p(x-i_0)>0$ (Lemma \ref{lem:ac}).
Since $p\geq 3$, 
there exists $i\in\{1,\dots,p-1\}\setminus\{i_0\}$, 
and $x=(x-i)+i$ with $v_p(x-i)=0$ and $v_p(i)=0$.
Then setting $y$ to $x-i$ and $z$ to $i$, all the three conjuncts of the given formula are satisfied.
Conversely, if $v_p(y)=v_p(z)=0$, then $v_p(y+z)\geq 0$.

For $p=2$,
if $x \in {\mathbb Q}_2$ is such that $v_2(x)\geq 0$, then 
$2x=(2x-1)+1$ with $v_2(2x-1)=0$ and $v_2(1)=0$.
Conversely,
if $y,z \in {\mathbb Q}_2$ are such that $v_2(y)=v_2(z)=0$, then
$v_2(y+z)>0$, so if $2x=y+z$, then
$v_2(x)\geq 0$.
\end{proof}

The following solves an open problem from~\cite[Remark 23]{GHW19} for $p=3$; the NP-hardness for $p \geq 5$ was already shown in~\cite[Prop.~22]{GHW19}. 

\begin{corollary}\label{cor:v=0}
Let $p \geq 3$ be prime. Then 
$\CSP({\mathbb Q}_p;+,=^p_0)$ is NP-hard.
\end{corollary}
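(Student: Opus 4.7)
The plan is to reduce from $\CSP({\mathbb Z}_p;+,=^p_0)$, which Proposition~\ref{prop:eq0} already shows to be NP-hard for $p \geq 3$. The only missing ingredient is the passage from $\mathbb{Z}_p$ to $\mathbb{Q}_p$, and this is exactly what Lemma~\ref{lem:geq0} supplies: for $p \geq 3$, the unary relation $\geq^p_0$, i.e.\ the set $\mathbb{Z}_p$ viewed as a subset of $\mathbb{Q}_p$, is primitive positive definable in $({\mathbb Q}_p;+,=^p_0)$.

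Concretely, I would exhibit a one-dimensional primitive positive interpretation of $({\mathbb Z}_p;+,=^p_0)$ in $({\mathbb Q}_p;+,=^p_0)$. The coordinate map is the inclusion $\mathbb{Z}_p \hookrightarrow \mathbb{Q}_p$, with domain carved out by $\geq^p_0(x)$. The preimage of equality on $\mathbb{Z}_p$ is ordinary equality; the preimage of the unary relation $=^p_0$ on $\mathbb{Z}_p$ is the relation $=^p_0$ itself (and $=^p_0(x)$ already entails $\geq^p_0(x)$); and the preimage of the graph of addition on $\mathbb{Z}_p$ is defined by
$$
\geq^p_0(x) \wedge \geq^p_0(y) \wedge x+y = z,
$$
where closure of $\mathbb{Z}_p$ under addition makes an extra conjunct $\geq^p_0(z)$ redundant. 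Each of these is primitive positive definable in $({\mathbb Q}_p;+,=^p_0)$ by Lemma~\ref{lem:geq0}, so the interpretation is indeed primitive positive.

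The final step is to apply Lemma~\ref{lem:pp-int} to this interpretation: since the source signature $\{+,=^p_0\}$ is finite, the lemma yields a polynomial-time reduction from $\CSP({\mathbb Z}_p;+,=^p_0)$ to $\CSP({\mathbb Q}_p;+,=^p_0)$, and NP-hardness of the latter follows. There is no real obstacle here; the substance has already been produced in Proposition~\ref{prop:eq0} and Lemma~\ref{lem:geq0}, and this corollary is a straightforward packaging of those results through the interpretation machinery of Lemma~\ref{lem:pp-int}.
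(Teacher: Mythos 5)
Your proposal is correct and follows exactly the paper's own argument: NP-hardness of $\CSP({\mathbb Z}_p;+,=^p_0)$ from Proposition~\ref{prop:eq0}, a primitive positive interpretation of $({\mathbb Z}_p;+,=^p_0)$ in $({\mathbb Q}_p;+,=^p_0)$ via the definability of $\geq^p_0$ from Lemma~\ref{lem:geq0}, and then Lemma~\ref{lem:pp-int}. You merely spell out the interpretation in more detail than the paper does, which is fine.
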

\begin{proof}
    Note that $({\mathbb Z}_p;+,=^p_0)$ has a primitive positive interpretation in $({\mathbb Q}_p;+,=^p_0)$,
    because $\geq_0^p$ is primitive positive definable in $({\mathbb Q}_p;+,=^p_0)$ by Lemma~\ref{lem:geq0}.
    Since 
    $\CSP({\mathbb Z}_p;+,=^p_0)$ is NP-hard by Proposition~\ref{prop:eq0}, the statement follows from 
    Lemma~\ref{lem:pp-int}. 
\end{proof}

\begin{lemma}\label{lem:eq0}
Let $c\in\mathbb{Z}$. The relation $=_c^2$  has the primitive positive definition 
$$
\exists y \big (v_2(y)\geq 0\wedge x=2^c+2^{c+1}y \big)
$$ 
in 
$({\mathbb Q}_2;+,1,\geq_0^2)$,
and in $({\mathbb Z}_2;+,1)$ the primitive positive definition 
$$
\exists y (x=2^c+2^{c+1}y ).
$$ 
\end{lemma}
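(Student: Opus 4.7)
The plan is to verify the two implications of the definitional equivalence in $({\mathbb Q}_2;+,1,\geq_0^2)$, and then observe that the ${\mathbb Z}_2$ case follows immediately; the formulas themselves are primitive positive once the shortcuts $2^c$ and $2^{c+1}y$ are expanded via the conventions recalled just before the lemma statement.

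For the forward direction, I would assume $v_2(x)=c$, so that $2^{-c}x$ has valuation $0$. Since $p-1=1$ when $p=2$, Lemma~\ref{lem:ac} forces the unique distinguished digit to be $i=1$ and yields $v_2(2^{-c}x-1)>0$, equivalently $v_2(x-2^c)\geq c+1$. Setting $y:=(x-2^c)/2^{c+1}$ then provides a witness with $v_2(y)\geq 0$ and $x=2^c+2^{c+1}y$, as required.

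For the backward direction, if $x=2^c+2^{c+1}y$ with $v_2(y)\geq 0$, then $v_2(2^c)=c$ while $v_2(2^{c+1}y)=c+1+v_2(y)\geq c+1>c$. By Lemma~\ref{lem:vpq}, the valuation of a sum of two terms of distinct valuations equals the smaller of the two, so $v_2(x)=c$.

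For the ${\mathbb Z}_2$ version, every $y\in{\mathbb Z}_2$ automatically satisfies $v_2(y)\geq 0$, so the valuation constraint on $y$ may be dropped without effect, and the argument above applies verbatim. I do not foresee any serious obstacle here; the key point is the peculiarity of $p=2$, namely that every nonzero $2$-adic number of valuation $c$ has the form $2^c$ plus something of strictly larger valuation, so a single "shift" by $2^c$ already realises the pp-definition.
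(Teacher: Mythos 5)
Your proof is correct and follows essentially the same route as the paper: Lemma~\ref{lem:ac} (with the unique digit $i=1$ forced by $p=2$) for the forward direction, and the equality case of the ultrametric inequality from Lemma~\ref{lem:vpq} for the converse, with the ${\mathbb Z}_2$ case obtained by noting that $v_2(y)\geq 0$ is automatic there.
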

\begin{proof}
If $v_2(x)=c$, then $x=2^c+2^{c+1}y$ with $v_2(y)\geq0$, i.e., $y\in\mathbb{Z}_2$ (Lemma \ref{lem:ac}).
Conversely, if $x=2^c+2^{c+1}y$ with $v_2(y)\geq0$, then 
$v_2(x)=\min\{v_2(2^c),v_2(2^{c+1}y)\}=c$.
\end{proof}

{Note that the primitive positive formula in Lemma~\ref{lem:eq0} has exponential representation size, since $2^{c+1}$ is a doubly exponentially large number. However, in all hardness proofs where we use this formula, $c$ will be a constant and hence the length of the formula will be a constant as well.}

\begin{lemma}\label{lem:neq0}
For all $p \in {\mathbb P}$, the relation $\neq_0^p$ has 
the primitive positive definition 
$$
 \bigwedge_{i=1}^{p-1}v_p(x-i)\leq 0
$$ in $({\mathbb Q}_p;+,1,\leq^p_0)$, and  in $({\mathbb Z}_p;+)$
the primitive positive definition $\exists y (py=x)$.
\end{lemma}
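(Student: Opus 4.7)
The plan is to handle the two claimed primitive positive definitions separately. In each case the task reduces to checking a concrete equivalence about $p$-adic valuations, which I would settle using only Lemma~\ref{lem:vpq} and Lemma~\ref{lem:ac}.

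For the first definition, I would prove that the conjunction $\bigwedge_{i=1}^{p-1} v_p(x-i) \leq 0$ holds in $\mathbb{Q}_p$ if and only if $v_p(x) \neq 0$. For the backward direction (by contrapositive), assume $v_p(x) = 0$; then Lemma~\ref{lem:ac} applied with $n = 0$ yields a unique $i_0 \in \{1,\dots,p-1\}$ with $v_p(x - i_0) > 0$, so the $i_0$-th conjunct fails. For the forward direction, suppose $v_p(x) \neq 0$ and fix $i \in \{1,\dots,p-1\}$. Since $v_p(-i) = v_p(i) = 0 \neq v_p(x)$ (the case $x = 0$ being covered by $v_p(0) = \infty \neq 0$), the strict-inequality part of Lemma~\ref{lem:vpq} gives $v_p(x-i) = \min(v_p(x), 0) \leq 0$. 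Finally, I would note that each atom $v_p(x-i) \leq 0$ is a proper primitive positive formula in the signature $\{+, 1, \leq_0^p\}$ by introducing an existentially quantified auxiliary variable $z_i$ together with the conjuncts $x = z_i + i$ and $\leq_0^p(z_i)$, using the abbreviation $i = 1 + \cdots + 1$ introduced just before Lemma~\ref{lem:geq0}.

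For the second definition, the equivalence is even simpler. Every $x \in \mathbb{Z}_p$ satisfies $v_p(x) \geq 0$, so the condition $v_p(x) \neq 0$ is the same as $v_p(x) \geq 1$, i.e., $x \in p\mathbb{Z}_p$, which is precisely the set of elements of the form $py$ with $y \in \mathbb{Z}_p$. Since $py$ is shorthand for $y + \cdots + y$ with $p$ summands, the formula $\exists y\,(py = x)$ is primitive positive in the signature $\{+\}$ and defines $\neq_0^p$ on $\mathbb{Z}_p$.

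There is essentially no obstacle in this lemma; both directions of each equivalence fall out immediately from the basic valuation lemmas once the right case distinction is made. The only mild points of care are to cover the edge case $x = 0$ in the forward direction of the first equivalence and to apply Lemma~\ref{lem:vpq} with the correct strict-inequality hypothesis $v_p(x) \neq v_p(-i)$.
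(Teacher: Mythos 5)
Your proposal is correct and follows essentially the same route as the paper: the backward direction via Lemma~\ref{lem:ac} producing the unique $i_0$ with $v_p(x-i_0)>0$, the forward direction via the equality case of Lemma~\ref{lem:vpq} (the paper just splits this into the cases $v_p(x)>0$ and $v_p(x)<0$ rather than writing $\min(v_p(x),0)$ uniformly), and the identical one-line argument for $\mathbb{Z}_p$. Your extra remarks on rewriting $v_p(x-i)\leq 0$ as a genuine primitive positive formula and on the edge case $x=0$ are fine but already covered by the paper's stated conventions.
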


\begin{proof}
If $v_p(x)>0$, then $v_p(x-i)=v_p(i)=0$ for every $1\leq i<p$,
and if $v_p(x)<0$, then
$v_p(x-i)=v_p(x)<0$ for every $i$.
Conversely, if $v_p(x)=0$
there exists 
$i_0\in\{1,\dots,p-1\}$
with
$v_p(x-i_0)>0$
(Lemma \ref{lem:ac}).
In $\mathbb{Z}_p$, $v_p(x)\neq0$ just means $v_p(x)\geq 1$, i.e., $x=py$ with $y\in\mathbb{Z}_p$.
\end{proof}

\begin{lemma}\label{lem:leq0}
Let $d \in\mathbb{Z}$.
Then $\leq^p_d$ has the primitive positive definition 
$$
 \bigwedge_{i=1}^{p-1}v_p(x+i p^{d+1}) \neq d+1
$$ 
in $({\mathbb Q}_p;+,1,\neq^p_{d+1})$ for $p\geq 3$,
and in $(\mathbb{Q}_2;+,1,\neq^2_d)$
%$\mathbb{Q}_p\models\forall x(v(x)\leq c\leftrightarrow $, and 
the primitive positive definition 
$$
 v_2(x+2^d)\neq d.
$$ 
\end{lemma}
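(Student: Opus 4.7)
The plan is to use the strong triangle inequality (Lemma~\ref{lem:vpq}) together with Lemma~\ref{lem:ac}. In each of the two formulas one perturbs $x$ by an element whose valuation is exactly $d+1$ (the terms $ip^{d+1}$) or exactly $d$ (the term $2^d$), and the whole question reduces to asking when the perturbed sum can equal that distinguished valuation. The cases $p\geq 3$ and $p=2$ are kept separate because for $p=2$ there is only one nonzero residue class modulo $p$, which changes both the shape of the formula and the combinatorial count used in the proof.

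For $p\geq 3$, I would argue as follows. If $v_p(x)\leq d$, then for every $i\in\{1,\dots,p-1\}$ we have $v_p(ip^{d+1})=d+1>v_p(x)$, and the strict non-archimedean equality of Lemma~\ref{lem:vpq} forces $v_p(x+ip^{d+1})=v_p(x)\leq d\neq d+1$, so the conjunction holds. For the converse I would prove the contrapositive: suppose $v_p(x)\geq d+1$. If $v_p(x)>d+1$, then $v_p(x+ip^{d+1})=d+1$ for every $i$, killing every conjunct. If $v_p(x)=d+1$, Lemma~\ref{lem:ac} supplies a unique $j\in\{1,\dots,p-1\}$ with $v_p(x-jp^{d+1})>d+1$, and then for any $i\in\{1,\dots,p-1\}$ with $i\neq p-j$ one has $p\nmid(i+j)$, so $v_p(x+ip^{d+1})=v_p((i+j)p^{d+1})=d+1$. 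The existence of such an $i$ is exactly where $p\geq 3$ enters, since it requires $\{1,\dots,p-1\}\setminus\{p-j\}\neq\emptyset$.

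For $p=2$ the argument splits into three cases according to $v_2(x)$ compared to $d$. If $v_2(x)<d$ or $v_2(x)>d$, Lemma~\ref{lem:vpq} immediately gives $v_2(x+2^d)=v_2(x)<d$ or $v_2(x+2^d)=v_2(2^d)=d$, respectively; the first case satisfies the formula and matches $v_2(x)\leq d$, while the second fails the formula and matches $v_2(x)>d$. The interesting boundary case $v_2(x)=d$ is where $p=2$ helps rather than hurts: writing $x=2^du$ with $u\in\mathbb{Z}_2^\times$, the fact that the only nonzero residue modulo $2$ is $1$ means $u+1\in 2\mathbb{Z}_2$, so $x+2^d=2^d(u+1)$ has valuation at least $d+1$, again $\neq d$. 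This is precisely the $p=2$ instance of Lemma~\ref{lem:ac}, and it is what makes a single shift by $2^d$ suffice.

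The main obstacle is conceptual rather than technical: identifying that the condition $v_p(x)\leq d$ can be detected by ruling out the precise valuation $d+1$ (resp.\ $d$) after a suitable shift, rather than by trying to bound the valuation directly. Once this is seen, the remaining work is a routine case analysis with the ultrametric inequality and Lemma~\ref{lem:ac}, and I expect no unforeseen complications.
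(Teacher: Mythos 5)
Your proof is correct and follows essentially the same route as the paper's: the forward direction via the ultrametric equality, and the converse by splitting $v_p(x)>d+1$ from $v_p(x)=d+1$ and using Lemma~\ref{lem:ac} to see that at most one conjunct can survive when $v_p(x)=d+1$, so that $p\geq 3$ guarantees a failing conjunct. The only cosmetic difference is that you locate the exceptional index as $p-j$ via $v_p(x-jp^{d+1})>d+1$ rather than directly as the unique $i_0$ with $v_p(x+i_0p^{d+1})>d+1$; the $p=2$ case is identical.
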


\begin{proof}
First let $p\geq 3$.
If $v_p(x)\leq d$,
then $v_p(x+i p^{d+1})=v_p(x) < d+1$ for every $i=1,\dots, p-1$.
Conversely, if $v_p(x)>d$,
then either $v_p(x)>d+1$, 
in which case $v_p(x+i p^{d+1})=d+1$ for every $i=1,\dots,p-1$,
or $v_p(x)=d+1$.
In this case, there exists (exactly) one $i_0 \in \{1,\dots,p-1\}$ with
$v_p(x+i_0 p^{d+1})>d+1$
(Lemma \ref{lem:ac}),
and $v_p(x+i p^{d+1})=v_p(p^{d+1})=d+1$ for all $i \in \{1,\dots,p-1\} \setminus \{i_0\}$.
Such an $i$ exists by the assumption that $p\geq 3$.

{Now let $p=2$.
If $v_2(x)<d$,
then $v_2(x+2^d)=v_2(x)<d$,
and if $v_2(x)=d$,
then $v_2(x+2^d)>d$ (Lemma \ref{lem:ac}).
Conversely,
if $v_2(x)>d$, then $v_2(x+2^d)=d$.} 
\end{proof}

\begin{theorem}\label{thm:class}
    Let $p \in {\mathbb P}$ be such that $p \geq 3$. Let ${\mathfrak R}$ be a reduct of ${\mathfrak Q}_p$ whose signature $\tau$ contains $\{+,1\}$. 
    Then $\CSP(\bR)$ is in $P$ if 
    $\bR$ is a reduct of one of the structures
    \begin{align} 
    & ({\mathbb Q}_p;+,1,(\leq_c^p)_{c \in {\mathbb Z}},(\neq_c^p)_{c \in {\mathbb Z}}) \label{eq:red1} \\
    & ({\mathbb Q}_p;+,1,(\geq_c^p)_{c \in {\mathbb Z}}), \label{eq:red2}
    \end{align}
    and is NP-complete otherwise. 
\end{theorem}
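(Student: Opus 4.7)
The theorem combines three claims: membership in NP, containment in P for reducts of (\ref{eq:red1}) and (\ref{eq:red2}), and NP-hardness for all other reducts. Membership in NP is immediate from Corollary \ref{cor:NP}, since every primitive positive $\tau$-sentence is also an existential $\mathfrak{Q}_p$-sentence.

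For the P upper bound I plan to translate a $\CSP(\mathfrak R)$-instance in polynomial time into an input for Proposition \ref{prop:alg-leq} or Theorem \ref{thm:alg-geq}. An atomic formula in the signature of $\mathfrak R$ is either an equation built from $+,1,=$ (giving a linear equation over $\mathbb{Q}$) or a unary valuation constraint. In case (\ref{eq:red1}), for each variable $x_j$ I combine its $\leq^p_*$-constraints into a single tightest upper bound $c_j \in \mathbb{Z} \cup \{\infty\}$ and collect its $\neq^p_*$-constraints into a finite set $D_j \subseteq \mathbb{Z}$, giving exactly the input format of Proposition \ref{prop:alg-leq}. In case (\ref{eq:red2}), I analogously record the tightest lower bound $c_j \in \mathbb{Z} \cup \{-\infty\}$ for each variable and invoke Theorem \ref{thm:alg-geq} with $\delta = 0$, which is permitted because $p \geq 3$.

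For NP-hardness, the key observation is that $\mathfrak R$ fails to be a reduct of (\ref{eq:red1}) or (\ref{eq:red2}) if and only if either (a) $\tau$ contains some symbol $=^p_c$, or (b) $\tau$ contains some $\geq^p_c$ together with some $\leq^p_d$ or some $\neq^p_d$, while containing no symbol of the form $=^p_*$. In both subcases I aim to produce a primitive positive definition of $=^p_0$ in $\mathfrak R$, from which NP-hardness follows from Corollary \ref{cor:v=0} and Lemma \ref{lem:pp-int}. A common tool is that the identity $v_p(p^k y) = k + v_p(y)$ lets one shift the parameter of any of $\leq^p_*, \geq^p_*, =^p_*, \neq^p_*$ by a constant amount via a pp-formula whose size depends only on $k$: one writes $p^k y$ as the iterated sum $y + \cdots + y$ when $k \geq 0$, and otherwise introduces an auxiliary variable expressing $y$ itself as an iterated sum of that variable.

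In case (a), shifting $=^p_c$ to $=^p_0$ by this mechanism already gives the required pp-definition. In case (b), I first shift $\geq^p_c$ to $\geq^p_0$. If $\tau$ contains some $\leq^p_d$, I shift it to $\leq^p_0$ and conjoin with $\geq^p_0$ to obtain $=^p_0$; otherwise $\tau$ contains some $\neq^p_d$, which I shift to $\neq^p_1$, and Lemma \ref{lem:leq0} (whose proof is where the assumption $p \geq 3$ enters crucially) yields $\leq^p_0$, and I proceed as above. The main obstacle is organizational rather than conceptual: verifying that the dichotomy between reducts of (\ref{eq:red1})/(\ref{eq:red2}) and the remaining reducts is exactly the case split (a)/(b), and that all the pp-definitions used have size bounded by a constant depending only on the signature, so that Lemma \ref{lem:pp-int} really yields a polynomial-time reduction.
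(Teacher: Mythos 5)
Your proposal is correct and follows essentially the same route as the paper's proof: NP-membership via Corollary~\ref{cor:NP}, tractability by feeding the aggregated valuation constraints into Proposition~\ref{prop:alg-leq} resp.\ Theorem~\ref{thm:alg-geq}, and NP-hardness by pp-defining $=^p_0$ (using constant-size shifts by powers of $p$ and Lemma~\ref{lem:leq0} to turn $\neq^p_d$ into an upper bound) and invoking Corollary~\ref{cor:v=0} with Lemma~\ref{lem:pp-int}. The only point worth adding is the explicit appeal to Proposition~\ref{prop:qvsqp} in the tractable cases, since the algorithms of Proposition~\ref{prop:alg-leq} and Theorem~\ref{thm:alg-geq} decide solvability in ${\mathbb Q}^n$ whereas $\CSP(\bR)$ asks for solutions in ${\mathbb Q}_p$.
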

\begin{proof}
The containment of $\CSP(\bR)$ in NP follows from Corollary \ref{cor:NP}.
    If $\tau$ contains $=_c^p$ for some $c \in {\mathbb Z}$, then the relation
    $=_0^p$ is primitively positively definable in $\bR$ and $\CSP(\bR)$ is NP-hard by Corollary~\ref{cor:v=0} and Lemma~\ref{lem:pp-int}. 
    So suppose that $\tau$ does not contain $=_c^p$ for any $c \in {\mathbb Z}$. 
    If $\bR$ does not contain $\geq_c^p$ for any $c \in {\mathbb Z}$, then $\bR$ is a reduct of the structure in~\eqref{eq:red1}. In this case, the polynomial-time tractability of $\CSP(\bR)$ follows from 
    Proposition \ref{prop:alg-leq}
    and Proposition \ref{prop:qvsqp}.
    So suppose that $\bR$ contains 
    $\geq_c^p$ for some $c \in {\mathbb Z}$. 
    If $\tau$ also contains $\leq_d^p$ 
    for some $d \in {\mathbb Z}$, then the relation $=_0^p$ is primitively positively definable as well, and we are again done.
    If $\tau$ contains $\neq_c^p$ for some $c \in {\mathbb Z}$, then $\leq_{c-1}^p$ is primitively positively definable in $\bR$ by Lemma~\ref{lem:leq0}, and we are in a case that we have already treated. 
    Otherwise, $\tau$ contains neither of  
    $\neq_c^p$, $\leq_c^p$, and $=_c^p$ for any $c \in {\mathbb Z}$, and hence $\bR$ is a reduct of the structure~\eqref{eq:red2}. 
    The polynomial-time tractability in this case follows from 
    Theorem~\ref{thm:alg-geq} and Proposition \ref{prop:qvsqp}.  
\end{proof}

 \begin{theorem}\label{thm:class2}
   Let $\bR$ be a reduct of ${\mathfrak Q}_2$
   whose signature $\tau$ contains $\{+,1\}$. 
   Then $\CSP(\bR)$ is in $P$ if 
    $\bR$ is a reduct of one of the structures 
    \begin{align} 
    & ({\mathbb Q}_2;+,1,(\leq_c^2)_{c \in {\mathbb Z}},(\neq_c^2)_{c \in {\mathbb Z}}) \label{eq:red3} \\
    & ({\mathbb Q}_2;+,1,(=^2_c)_{c \in {\mathbb Z}},(\geq_c^2)_{c \in {\mathbb Z}}), \label{eq:red4} 
    \end{align}
    and is NP-complete otherwise. 
\end{theorem}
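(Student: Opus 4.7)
The plan is to mirror the proof of Theorem~\ref{thm:class}, exploiting the new feature that for $p=2$ Theorem~\ref{thm:alg-geq} handles constraints of the form $v_2(x)=c$. Containment of $\CSP(\bR)$ in NP is immediate from Corollary~\ref{cor:NP}. For the tractability side I would invoke Proposition~\ref{prop:alg-leq} for reducts of~\eqref{eq:red3} (encoding each $\neq^2_c$ via the sets $D_j$) and Theorem~\ref{thm:alg-geq} for reducts of~\eqref{eq:red4} (encoding each $=^2_c$ via the vector $\delta$, which is permitted precisely because $p=2$), combined in each case with Proposition~\ref{prop:qvsqp} to pass between $\mathbb{Q}$ and $\mathbb{Q}_2$.

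For NP-hardness I would start from the observation that if $\bR$ is a reduct of neither~\eqref{eq:red3} nor~\eqref{eq:red4}, then $\tau$ must contain some ``lower-type'' relation $\geq^2_d$ or $=^2_d$ and some ``upper-type'' relation $\leq^2_c$ or $\neq^2_c$. My goal would then be to primitively positively define both $\geq^2_0$ and $\leq^2_1$ in $\bR$; this yields a one-dimensional primitive positive interpretation of $(\mathbb{Z}_2;+,\leq^2_1)$ in $\bR$, with domain the pp-definable subset $\{x\in\mathbb{Q}_2:v_2(x)\geq 0\}=\mathbb{Z}_2$, from which NP-hardness of $\CSP(\bR)$ follows via Proposition~\ref{prop:eq0} and Lemma~\ref{lem:pp-int}.

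To produce those pp-definitions I would use two simple moves. The first is \emph{scaling}: since $c,d$ are fixed constants of $\bR$, terms of the form $2^kx$ for bounded integers $k$ are expressible by constant-size pp-formulas using $+$ and $1$ (directly if $k\geq 0$, or via an existentially quantified witness $y$ with $2^{-k}y=x$ if $k<0$). Scaling gives the equivalences $\leq^2_1(x)\iff\leq^2_c(2^{c-1}x)$ for $c\geq 1$ (with the analogous formula using $2^{1-c}y=x$ if $c\leq 0$), and symmetrically $\geq^2_0(x)\iff\geq^2_d(2^dx)$ for $d\geq 0$ (analogously if $d<0$). The second move invokes Lemmas~\ref{lem:geq0} and~\ref{lem:leq0}: from $=^2_d$, scaling first produces $=^2_0$ and then Lemma~\ref{lem:geq0} produces $\geq^2_0$; from $\neq^2_c$, Lemma~\ref{lem:leq0} produces $\leq^2_c$ and then scaling produces $\leq^2_1$. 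Running through the four combinations of upper- and lower-type symbols in $\tau$ shows that both $\geq^2_0$ and $\leq^2_1$ are always pp-definable in $\bR$.

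The conceptual obstacle relative to Theorem~\ref{thm:class} is that Corollary~\ref{cor:v=0} is no longer available: since $=^2_0$ alone is polynomial-time tractable by Theorem~\ref{thm:alg-geq}, hardness cannot simply be imported via Lemma~\ref{lem:geq0} as in the $p\geq 3$ case. The argument genuinely requires a mixed pair of upper- and lower-type constraints, and the main thing to verify carefully is that the scaling move together with Lemmas~\ref{lem:geq0} and~\ref{lem:leq0} always suffices to reach the NP-hard target pair $(\geq^2_0,\leq^2_1)$, regardless of the particular values $c,d\in\mathbb{Z}$ that happen to appear in $\tau$.
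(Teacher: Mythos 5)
Your proposal is correct and follows essentially the same route as the paper: NP membership via Corollary~\ref{cor:NP}, tractability of the reducts of \eqref{eq:red3} and \eqref{eq:red4} via Proposition~\ref{prop:alg-leq} and Theorem~\ref{thm:alg-geq} together with Proposition~\ref{prop:qvsqp}, and hardness by pp-defining $\geq^2_0$ and $\leq^2_1$ (via Lemmas~\ref{lem:geq0} and~\ref{lem:leq0}) to interpret $({\mathbb Z}_2;+,\leq^2_1)$ and invoke Proposition~\ref{prop:eq0} with Lemma~\ref{lem:pp-int}. Your explicit treatment of the scaling step $v_2(2^kx)=k+v_2(x)$ for fixed $k$ is a detail the paper leaves implicit, but it is the intended argument.
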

\begin{proof}
    The containment of $\CSP(\bR)$ in NP follows again from Corollary \ref{cor:NP}.
    If $\tau$ contains neither $\neq_c^2$ nor $\leq_c^2$ for any $c \in {\mathbb Z}$,
    then $\bR$ is a reduct of the structure in~\eqref{eq:red4}, and the polynomial-time tractability of $\CSP(\bR)$ follows from Theorem~\ref{thm:alg-geq} and Proposition \ref{prop:qvsqp}. 
    Otherwise, the relation $\leq_1^2$ is primitively positively definable in $\bR$ by Lemma~\ref{lem:leq0}. 
    If additionally $\geq_0^2$ is primitively positively definable in $\bR$, then
    the structure $({\mathbb Z}_2;+,\leq_1^2)$ has a primitive positive interpretation in $\bR$, and the NP-hardness of $\CSP(\bR)$ follows from Proposition~\ref{prop:eq0} via Lemma~\ref{lem:pp-int}. 
    If not, then by Lemma~\ref{lem:geq0} we may assume that $\tau$ contains neither $\geq_c^p$ nor $=_c^p$ for any $c \in {\mathbb Z}$. In this case, 
    $\bR$ is a reduct of the structure in~\eqref{eq:red3}, and the polynomial-time tractability of $\CSP(\bR)$ follows from Proposition~\ref{prop:alg-leq} and Proposition \ref{prop:qvsqp}. 
\end{proof}

\section{Combining several primes, and the ordering}

\noindent
The complexity classification results for reducts of $\bQ_p$ from Theorems~\ref{thm:class} and~\ref{thm:class2} translate 
to complexity classification results 
for expansions of $({\mathbb Q};+,1)$ by 
relations from 
$$
\tau_p := \{\leq_c^p,\geq_c^p,=_c^p,\neq^p_c \; \mid c \in {\mathbb Z}\},
$$ 
for fixed $p \in {\mathbb P}$, via Proposition~\ref{prop:qvsqp}. 
Interestingly, we can even derive results 
about expansions of $({\mathbb Q};+,1)$
by relations from $\bigcup_{p \in {\mathbb P}} \tau_p$. 
Moreover, we may also obtain results about expansions of $({\mathbb Q};+,1,<)$
and of $({\mathbb Q};+,1,\leq)$.
The key to this is the following consequence of the approximation theorem for absolute values. 
As in the introduction, define $|x|_p := p^{-v_p(x)}$ for $x \in {\mathbb Q}$.

\begin{lemma}\label{lem:approx}
Let $m,n,r\in\mathbb{N}$, $\epsilon>0$, $A\in {\mathbb Q}^{m\times n}$, $b\in\mathbb{Q}^m$,
and let $p_1,\dots,p_r$ be distinct prime numbers.
For each $i\in\{0,\dots,r\}$ let $x^{(i)}\in\mathbb{Q}^n$ be such that  $Ax^{(i)}=b$.
Then there exists $x\in\mathbb{Q}^n$ with $Ax=b$ such that for every $j \in \{1,\dots,n\}$ 
and $i \in \{1,\dots,r\}$ we have 
$|x_j-x^{(0)}_j|<\epsilon$ 
and
$|x_j-x^{(i)}_j|_{p_i}<\epsilon$.
\end{lemma}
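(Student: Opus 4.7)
The plan is to invoke the weak approximation theorem for the pairwise inequivalent absolute values $|\cdot|, |\cdot|_{p_1}, \ldots, |\cdot|_{p_r}$ on $\mathbb{Q}$, combined with the observation that the solution set $S := \{x \in \mathbb{Q}^n : Ax = b\}$ is an affine subspace, so convex-like combinations of the $x^{(i)}$ with coefficients summing to $1$ still satisfy $Ax = b$. Concretely, I would search for $x$ of the form
$$x := x^{(0)} + \sum_{i=1}^r t_i (x^{(i)} - x^{(0)})$$
with $t_1, \ldots, t_r \in \mathbb{Q}$ to be chosen. Since $x^{(i)} - x^{(0)} \in \ker A$ for every $i$, any such $x$ automatically lies in $S$, and we are free to pick the $t_i$ based only on the approximation requirements.

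Next I would translate the approximation bounds into bounds on the $t_i$. We have $x - x^{(0)} = \sum_{i=1}^r t_i(x^{(i)} - x^{(0)})$, so by the ordinary triangle inequality
$$|x_j - x^{(0)}_j| \leq \sum_{i=1}^r |t_i|\, |x^{(i)}_j - x^{(0)}_j|,$$
which is controlled by making each $|t_i|$ small. For a fixed $k \in \{1,\dots,r\}$, rewriting
$$x - x^{(k)} = (t_k - 1)(x^{(k)} - x^{(0)}) + \sum_{i \neq k} t_i(x^{(i)} - x^{(0)})$$
and applying the ultrametric property of $|\cdot|_{p_k}$ yields
$$|x_j - x^{(k)}_j|_{p_k} \leq \max\Bigl\{|t_k - 1|_{p_k}\, |x^{(k)}_j - x^{(0)}_j|_{p_k},\ \max_{i \neq k} |t_i|_{p_k}\, |x^{(i)}_j - x^{(0)}_j|_{p_k}\Bigr\},$$
which is controlled by making $|t_k - 1|_{p_k}$ and $|t_i|_{p_k}$ for $i \neq k$ small.

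Thus it suffices to find, for each $i \in \{1,\dots,r\}$, an element $t_i \in \mathbb{Q}$ that is close to $0$ with respect to $|\cdot|$ and with respect to $|\cdot|_{p_k}$ for every $k \neq i$, while being close to $1$ with respect to $|\cdot|_{p_i}$. The existence of such $t_i$ for any prescribed accuracy $\delta > 0$ is exactly the content of the approximation theorem \cite[Ch.~XII, Thm.~1.2]{Lang} applied to the pairwise inequivalent non-trivial absolute values $|\cdot|, |\cdot|_{p_1}, \ldots, |\cdot|_{p_r}$ on $\mathbb{Q}$. Choosing $\delta$ sufficiently small relative to $r$ and to the finitely many quantities $|x^{(i)}_j - x^{(0)}_j|$ and $|x^{(i)}_j - x^{(0)}_j|_{p_k}$ (for $i \in \{1,\dots,r\}$, $j \in \{1,\dots,n\}$, $k \in \{1,\dots,r\}$) yields the desired $x$.

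There is no real obstacle: the only non-elementary ingredient is the invocation of weak approximation, and the rest is a bookkeeping exercise combining the ordinary triangle inequality (in the archimedean direction) with the ultrametric inequality (in the $p$-adic directions). The affine parametrization of the solution space ensures the linear constraint $Ax = b$ is preserved throughout.
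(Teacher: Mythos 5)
Your proof is correct, and it deploys the weak approximation theorem in a genuinely different way than the paper does. The paper first writes the solution set $L$ as $y_0+\sum_{k=1}^d\lambda_k y_k$ with $y_1,\dots,y_d$ linearly independent, observes that the resulting bijection $L\to\mathbb{Q}^d$ is a homeomorphism for the real topology and for each $p$-adic topology, and thereby reduces to the case $A=0$, $b=0$, where the claim is literally the approximation theorem of \cite[Ch.~XII, Thm.~1.2]{Lang} applied coordinatewise to the free coordinates of $x^{(0)},\dots,x^{(r)}$. You instead apply the approximation theorem only to produce $r$ scalar ``selectors'' $t_i\in\mathbb{Q}$ with $|t_i|$ and $|t_i|_{p_k}$ for $k\neq i$ small and $|t_i-1|_{p_i}$ small, and then take the affine combination $x^{(0)}+\sum_{i=1}^r t_i(x^{(i)}-x^{(0)})$, which stays in the solution set because each difference $x^{(i)}-x^{(0)}$ lies in $\ker A$. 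Your route avoids computing a kernel basis and the homeomorphism argument, at the price of the explicit triangle-inequality and ultrametric bookkeeping, which you carry out correctly: the rewriting of $x-x^{(k)}$ isolating the coefficient $t_k-1$ is the right move, and choosing the common accuracy $\delta$ small relative to the finitely many constants $|x^{(i)}_j-x^{(0)}_j|$ and $|x^{(i)}_j-x^{(0)}_j|_{p_k}$ closes the argument (the degenerate cases $r=0$ or $x^{(i)}=x^{(0)}$ cause no trouble since you only multiply by these constants). Both arguments rest on the same external ingredient, so neither is more general, but yours is somewhat more self-contained in that it uses the affine structure of $L$ directly rather than through a reparametrization.
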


\begin{proof}
Write the solution space $L\subseteq\mathbb{Q}^n$ of $Ax=b$ as in (\ref{eqn:solution_space}).   
The map $L\rightarrow\mathbb{Q}^d$, $y_0+\sum_{k=1}^d\lambda_ky_k\mapsto (\lambda_1,\dots,\lambda_d)$
is a homeomorphism with respect to the real topology and with respect to each $p$-adic topology.
We can therefore assume without loss of generality that
$L=\mathbb{Q}^n$, i.e., that $A=0$ and $b=0$.
The claim is then precisely
the statement of the approximation theorem for finitely many inequivalent absolute values on a field $K$ (\cite[Ch.~XII, Thm.~1.2]{Lang}) in the case $K=\mathbb{Q}$,
applied for each $j\in\{1,\dots,n\}$.
\end{proof}

Let $\mathfrak Q$ be the expansion of $({\mathbb Q};+,1)$ by new relations
for the symbols from 
$$
 \tau := \{<\} \cup \bigcup_{p \in \mathbb P} \tau_p. 
$$  

\begin{proposition}\label{prop:combine} 
Let $\varphi$ be a conjunction of atomic 
$(\{+,1\} \cup \tau)$-formulas. 
Let $\varphi_<$ be all conjuncts of $\varphi$ formed with the symbol $<$, 
let $\varphi_p$
be all conjuncts of $\varphi$ formed with symbols from $\tau_p$, 
and let $\varphi_=$ be all the  conjuncts formed with $=$. 
Then $\varphi$ is satisfiable in $\mathfrak Q$ if and only if 
$\varphi_= \wedge \varphi_<$ is satisfiable in $\mathfrak Q$
and $\varphi_= \wedge \varphi_p$ is satisfiable in $\mathfrak Q$
for each $p \in {\mathbb P}$.
\end{proposition}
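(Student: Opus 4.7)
The forward direction is immediate, since any satisfying assignment for $\varphi$ also satisfies each of its subconjunctions. For the converse my plan is to glue the individual witnesses together using Lemma~\ref{lem:approx}.

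Concretely, I would first collect the finitely many primes $p_1,\dots,p_r$ that actually appear in the valuation conjuncts of $\varphi$ (primes not appearing in $\varphi$ give an empty $\varphi_p$, making the corresponding hypothesis trivial). Rewrite the equations of $\varphi_=$ as a rational linear system $Ax=b$. By hypothesis, there is a tuple $x^{(0)}\in\mathbb{Q}^n$ satisfying $\varphi_=\wedge\varphi_<$ and, for each $i\in\{1,\dots,r\}$, a tuple $x^{(i)}\in\mathbb{Q}^n$ satisfying $\varphi_=\wedge\varphi_{p_i}$; crucially, all of them solve the same system $Ax=b$. Feeding them, together with a parameter $\epsilon>0$ to be fixed later, into Lemma~\ref{lem:approx} produces an $x\in\mathbb{Q}^n$ with $Ax=b$ that is $\epsilon$-close to $x^{(0)}$ in the real absolute value and $\epsilon$-close to $x^{(i)}$ in $|\cdot|_{p_i}$ for each $i$.

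The next step is to argue that for $\epsilon$ sufficiently small, $x$ inherits every remaining conjunct. A conjunct $t<s$ in $\varphi_<$ involves terms $t,s$ that are affine functions of the variables with integer coefficients, so ordinary continuity of evaluation on $\mathbb{R}$ propagates the strict inequality from $x^{(0)}$ to $x$. For a conjunct $v_{p_i}(t)\bowtie c$ in $\varphi_{p_i}$, the integer coefficients of $t$ give $v_{p_i}(t(x)-t(x^{(i)})) \geq \min_j v_{p_i}(x_j-x^{(i)}_j)$; by choosing $\epsilon$ small enough in terms of $c$ and $v_{p_i}(t(x^{(i)}))$, one forces $v_{p_i}(t(x)-t(x^{(i)})) > \max(c,v_{p_i}(t(x^{(i)})))$, and then Lemma~\ref{lem:vpq} yields $v_{p_i}(t(x))=v_{p_i}(t(x^{(i)}))$ when $t(x^{(i)})\neq 0$, and $v_{p_i}(t(x))>c$ otherwise. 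This preserves each of the four relation types $\leq^{p_i}_c,\geq^{p_i}_c,=^{p_i}_c,\neq^{p_i}_c$ simultaneously. Since $\varphi$ has only finitely many conjuncts, one uniform $\epsilon$ suffices.

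I expect the only real obstacle to be bookkeeping: Lemma~\ref{lem:approx} must be applied to the common linear system $Ax=b$ coming from $\varphi_=$ (which is shared by all witnesses precisely because each $x^{(i)}$ satisfies $\varphi_=$), and $\epsilon$ has to be chosen after inspecting both the integer constants $c$ and the integer coefficients of the terms in the non-equational conjuncts. Once those dependencies are tracked, the non-archimedean inequality of Lemma~\ref{lem:vpq} does all the actual work.
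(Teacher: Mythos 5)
Your proposal is correct and follows essentially the same route as the paper: reduce to the common linear system coming from $\varphi_=$, apply Lemma~\ref{lem:approx} to the individual witnesses, and observe that the solution sets of $\varphi_<$ and of each $\varphi_p$ are open in the real and $p$-adic topologies respectively, so a sufficiently small $\epsilon$ works. The paper simply states this openness, whereas you verify it by hand via continuity and the ultrametric inequality; that is a matter of level of detail, not of method.
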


\begin{proof}
The forward implication is trivial. 
For the converse,  
let $s^< \in {\mathbb Q}^n$ be 
a satisfying assignment for 
$\varphi_= \wedge \varphi_<$,
let $P$ denote the (finite) set of prime numbers such that $\varphi$ contains symbols from $\tau_p$,
and for each $p\in P$ let
$s^{(p)} \in {\mathbb Q}^n$ be a satisfying assignment for 
$\varphi_= \wedge \varphi_p$.
The set $U_<\subseteq\mathbb{Q}^n$ of satisfying assignments for $\varphi_<$
is open in the real topology, 
and the set $U_p$ of satisfying assignments for $\varphi_p$
is open in the $p$-adic topology, for each $p$.
In particular, there exists $\epsilon>0$ such that 
the whole box $\{y\in\mathbb{Q}^n:|y_j-s_j^<|<\epsilon\mbox{ for every $j$}\}$ is contained in $U_<$,
and similarly $\{y\in\mathbb{Q}^n:|y_j-s_j^{(p)}|_p<\epsilon\mbox{ for every $j$}\}\subseteq U_p$ for every $p\in P$.
Therefore, by Lemma~\ref{lem:approx}, there exists 
$s \in {\mathbb Q}^n$ such that 
$s$ satisfies $\varphi_=$
and $s\in U_<\cap\bigcap_{p\in P}U_p$,
hence $s$ is a satisfying assignment for $\varphi$.
\end{proof}

Proposition~\ref{prop:combine} only works for strict inequalities, and the corresponding statement would be false for weak inequalities. On the algorithmic side, however, there is a way to reduce the problem to the satisfiability problem for strict inequalities, and we obtain the following result.

\begin{theorem}\label{thm:combine}\label{thm:inequalities}
Let $\bR$ be a reduct of $(\bQ,\leq)$ whose signature  contains $\{1,+\}$. 
If the signature of $\bR$ contains 
\begin{itemize}
    \item $=^p_c$ for some $c \in {\mathbb Z}$ and $p \in {\mathbb P}$ with $p \geq3$, or 
    \item $\geq_{c_1}^p$ and a relation from $\{\leq_{c_2}^p,\neq_{c_2}^p\}$  
    for some $c_1,c_2 \in {\mathbb Z}$ and $p \in {\mathbb P}$ with $p \geq3$, 
    \item a relation from $\{\geq_{c_1}^2,=_{c_1}^2\}$ and a relation from $\{\leq_{c_2}^2,\neq_{c_2}^p\}$ 
    for some $c_1,c_2 \in {\mathbb Z}$, 
\end{itemize}
then $\CSP(\bR)$ is NP-complete; otherwise, $\CSP(\bR)$ 
is in P.
\end{theorem}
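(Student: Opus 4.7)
The plan is to handle NP-hardness and polynomial-time tractability separately, with the main new ingredient being a reduction of weak inequalities to strict ones via linear programming.

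For NP-containment, Corollary~\ref{cor:NP} handles the valuation constraints and linear programming handles $\leq$ and $<$; Proposition~\ref{prop:combine} lets one decouple the two and conclude that $\CSP(\bR)$ is in NP. For the hardness direction, in each of the three listed cases I would show that $\CSP(\bR)$ primitively positively defines a relation already known to be NP-hard from Section~\ref{sec:hard}, with the ordering $\leq$ playing no role. In case~(i), $=_c^p$ for $p\geq 3$ yields $=_0^p$ by translation via $+$ and $1$, so Corollary~\ref{cor:v=0} and Lemma~\ref{lem:pp-int} apply. In case~(ii), Lemma~\ref{lem:leq0} converts $\neq_{c_2}^p$ into $\leq_{c_2-1}^p$ primitively positively, and combined with $\geq_{c_1}^p$ after a suitable translation the intersection yields $=_c^p$ for some $c \in \mathbb{Z}$, reducing to case~(i). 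Case~(iii) is analogous over $\mathbb{Q}_2$, giving a primitive positive definition of $\leq_1^2$, which is NP-hard by Proposition~\ref{prop:eq0}.

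For the tractability direction, given an input $\varphi$, I would split its conjuncts into $\varphi_=$ (equations), $\varphi_\leq$ (weak inequalities), $\varphi_<$ (strict inequalities), and $\varphi_p$ (valuation constraints in $\tau_p$) for each prime $p$ appearing in the input. Use linear programming in polynomial time (\cite[Section 13]{Schrijver}) to decide $\varphi_= \wedge \varphi_\leq \wedge \varphi_<$ over $\mathbb{Q}$; if infeasible, output NO. Otherwise compute in polynomial time the affine hull of the feasible set, producing an equivalent system $\varphi_=' \wedge \varphi_<'$ in which $\varphi_='$ extends $\varphi_=$ by the implicit equalities detected among $\varphi_\leq$, and $\varphi_<'$ contains the remaining weak inequalities, now satisfied strictly on the relative interior of the feasible set, together with the original strict inequalities from $\varphi_<$.

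After this preprocessing, Proposition~\ref{prop:combine} applies to $\varphi_=' \wedge \varphi_<' \wedge \bigwedge_p \varphi_p$: it is satisfiable if and only if $\varphi_=' \wedge \varphi_<'$ is satisfiable (checkable by linear programming) and, for each relevant prime $p$, $\varphi_=' \wedge \varphi_p$ is satisfiable. Since we are in a non-hard case, the $\tau_p$-part of the signature of $\bR$ falls into one of the tractable fragments of Theorems~\ref{thm:class} and~\ref{thm:class2}, so each valuation sub-problem is decidable in polynomial time by Proposition~\ref{prop:alg-leq} or Theorem~\ref{thm:alg-geq}. The main obstacle is that Proposition~\ref{prop:combine} is stated only for strict inequalities; the affine-hull reduction is what overcomes this, since it is precisely the open-in-the-relative-topology property of the resulting ordering part that allows the approximation theorem underlying Proposition~\ref{prop:combine} to merge the real and $p$-adic local solutions into a single global rational one.
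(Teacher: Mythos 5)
Your proposal is correct and follows essentially the same route as the paper: NP-hardness via the classification results of Section~\ref{sec:hard}, and tractability by detecting the implicit equalities among the weak inequalities with linear programming (your ``affine hull'' computation is exactly the paper's per-conjunct test of whether $\varphi_0\wedge\psi^<$ remains feasible), then decoupling the resulting strict real part from the $p$-adic parts via Proposition~\ref{prop:combine} and the algorithms of Proposition~\ref{prop:alg-leq} and Theorem~\ref{thm:alg-geq}. No substantive differences.
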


\begin{proof}
%\todo[inline]{Part I: NP-hard}
If for some $p \geq 3$, the signature of $\bR$ contains a symbol of the form $=_c^p$,
or a relation of the form $\geq_c^p$ and a symbol of the form $\leq_c^p$ or $\neq_c^p$, 
then the NP-hardness of $\CSP(\bR)$ follows from Theorem~\ref{thm:class}
and Proposition \ref{prop:qvsqp}. 
Moreover, if %$p = 2$ and 
the signature contains a symbol of the form $=^2_c$ or $\geq^2_c$ and a symbol of the form $\leq^2_p$ or $\neq_c^p$, then the NP-hardness of $\CSP(\bR)$ follows from Theorem~\ref{thm:class2}
and Proposition \ref{prop:qvsqp}.

%\todo[inline]{Part II: Reduction from $\leq$ to $<$}
Otherwise, let $\varphi$ be an instance of $\CSP(\bR)$. 
Similar to Proposition \ref{prop:combine}  let 
\begin{itemize}
    \item 
$\varphi_<$ be the conjuncts of $\varphi$ formed with the symbol $<$,
\item 
$\varphi_\leq$ the conjuncts formed with $\leq$,
\item 
$\varphi_p$ the conjuncts formed with symbols from $\tau_p$,
and
\item $\varphi_=$ the conjuncts formed with $=$.
\end{itemize}
Let $P$ be the set of $p\in\mathbb{P}$ such that a symbol from $\tau_p$ occurs in $\varphi$.
For any instance $\psi$ denote by $\psi^<$ the instance obtained by replacing all $\leq$ by $<$.

We first check with known methods whether there is a solution for $\varphi_0:=\varphi_=\wedge\varphi_<\wedge\varphi_\leq$
(see, e.g.,~\cite[final remark in Section 13.4]{Schrijver}). 
If there is no solution, then output NO.
Otherwise,
let $\Psi$ be the set of conjuncts of $\varphi_\leq$.
We then test for each $\psi \in \Psi$ 
whether the formula $\varphi_0 \wedge \psi^<$ 
is still satisfiable (again, using known methods).  
If $\varphi_0 \wedge \psi^<$ is unsatisfiable, then 
every solution of $\varphi_0$ must satisfy the formula $\psi^=$ obtained from $\psi$ by replacing $\leq$ with $=$. 
We then recursively run the entire algorithm on the formula where we replace the conjunct $\psi$ by $\psi^=$. 
Otherwise, if for every $\psi \in \Psi$, the formula 
$\varphi_0 \wedge \psi^<$ has a solution $s_\psi$, then 
$\varphi_0^{<}$ 
has a solution $s^<$ as well.
This is clear if $\Psi = \emptyset$; 
otherwise, we note that the function $f \colon \mathbb{Q}^k\rightarrow\mathbb{Q}$ given by  $(x_1,\dots,x_k) \mapsto \frac{1}{k} \sum_{i=1}^k x_i$ applied componentwise preserves $+$, $1$, $\leq$,
and strongly preserves $<$ in the sense that 
$f(x_1,\dots,x_k) < f(y_1,\dots,y_k)$ if 
$x_1 \leq y_1$, \dots, $x_k \leq y_k$ and $x_i < y_i$ for at least one $i \in \{1,\dots,k\}$.
This shows that we may take $s^< := \frac{1}{|\Psi|} \sum_{\psi \in \Psi} s_\psi$.

%\todo[inline]{Part III: Approximation}
We run the polynomial-time algorithm 
from Theorem~\ref{thm:class2} on $\varphi_=\wedge\varphi_2$
and for each $p\in P\setminus\{2\}$ the polynomial-time algorithm 
from Theorem~\ref{thm:class} on 
$\varphi_=\wedge\varphi_p$.
If one of these algorithms returns NO, then $\varphi$ is unsatisfiable by Proposition~\ref{prop:qvsqp}. 
If all of the algorithms return YES, then 
$\varphi^<$ 
has a solution by Proposition~\ref{prop:qvsqp} and Proposition~\ref{prop:combine},
and therefore also $\varphi$ has a solution.

%\todo[inline]{Part IV: NP}
Finally, 
$\CSP(\mathfrak{Q})$ is in NP
as can be shown 
by repeating the argument from the previous paragraphs for an instance $\varphi$ of $\CSP(\mathfrak{Q})$
and
using Corollary~\ref{cor:NP} instead of the polynomial-time algorithms.
\end{proof}

\section{Conclusions and an open problem}

\noindent
We have presented polynomial-time algorithms for the satisfiability problem of systems of 
linear equalities combined with 
various valuation constraints. For such systems, the satisfiability in ${\mathbb Q}_p$ is equivalent to satisfiability in ${\mathbb Q}$ (Proposition~\ref{prop:qvsqp}). 
We also prove the matching NP-hardness results, answering open questions from~\cite{GHW19} (Theorem~\ref{thm:class} and Theorem~\ref{thm:class2}; also see Figure~\ref{fig:table}). 
Our results can be combined with the polynomial-time tractability result for the satisfiability of (strict and weak) linear inequalities over ${\mathbb Q}$, and we may even solve valuation constraints for different prime numbers simultaneously (Theorem~\ref{thm:combine}). 
Our polynomial-time tractability result for linear inequalities with
valuation constraints of the form $v_2(x) = c$, for constants $c \in {\mathbb Z}$ given in binary, would also follow from a positive answer to the following question, which remains open.

\begin{question}
Is there a polynomial-time algorithm for the satisfiability problem of systems of weak linear inequalities where the coefficients of the inequalities are of the form $2^c$ where $c$ is represented in binary?
\end{question}

Such an algorithm would also imply a polynomial-time algorithm for mean-payoff-games (see~\cite{BodLohoSkomra} for related reductions) which is a problem currently not known to be in P.

\begin{figure}
\begin{center}
\begin{tabular}{|l|p{2.5cm}|p{2.5cm}|p{2.5cm}|p{2.5cm}|}
 \hline 
 & $\mathbb{Q}_p$, $p \geq 3$  
 & $\mathbb{Q}_p$, $p=2$ 
 & $\mathbb{Z}_p$, $p \geq 3$ 
 & $\mathbb{Z}_p$, $p=2$ \\
 \hline 
 $\emptyset$ 
 & \multicolumn{2}{c|}{P: Gauss algorithm}
 & \multicolumn{2}{c|}{P: Hermite normal form} \\ 
 \hline
 $v_p(x)\geq c$ 
 & \multicolumn{4}{c|}{P:  
 \ref{thm:alg-geq} } \\
 \hline
 $v_p(x)=0$ 
 & NP-hard: def.~$\mathbb{Z}_p$ \ref{lem:geq0} 
 & P: reduce to $v_p(x)\geq0$  \ref{lem:eq0} 
 & NP-hard: \ref{prop:eq0} %or Plonka 
 & P: reduce to $\emptyset$ \ref{lem:eq0}  \\ 
 \hline
 $v_p(x)=c$ 
 & NP-hard: solves $v_p(x)=0$ 
 & P: \ref{thm:alg-geq} 
 & NP-hard: solves $v_p(x)=0$ 
 & P: \ref{thm:alg-geq} \\
 \hline
 $v_p(x)\leq 0$ 
 & \multicolumn{2}{c|}{P: special case of $v_p(x)\leq c$} 
 & NP-hard: same as $v_p(x)=0$ 
 & P: same as $v_p(x)=0$\\
 \hline
  $v_p(x)\leq 1$ 
  & \multicolumn{2}{c|}{P: special case of $v_p(x)\leq c$}
  & \multicolumn{2}{c|}{NP-hard: \ref{prop:eq0}}\\
  \hline
 $v_p(x)\leq c$ 
 &\multicolumn{2}{c|}{ P: \ref{prop:alg-leq}}
 &\multicolumn{2}{c|}{ NP-hard: solves $v_p(x)\leq 1$} \\
 \hline
 $v_p(x)\neq 0$ 
 &\multicolumn{2}{c|}{P: \ref{prop:alg-leq} or reduce to $v_p(x)\leq 0$ via \ref{lem:neq0}}
 &\multicolumn{2}{c|}{P: reduces to $\emptyset$ via \ref{lem:neq0}}\\
 \hline
 $v_p(x)\neq c$ 
 &\multicolumn{2}{c|}{P: \ref{prop:alg-leq}}
 &\multicolumn{2}{c|}{NP-hard: def.~$v_p(x)\leq 1$ via \ref{lem:leq0}}\\
 \hline
\end{tabular}
\end{center}
\caption{An overview of polynomial-time tractability and NP-hardness for systems of linear equations with valuation constraints.}
\label{fig:table}
\end{figure}

\bibliographystyle{alpha}
\bibliography{global}

\end{document}